\newtheorem{protocol}{Protocol}
\renewcommand{\Pr}{\ensuremath\mathrm{Pr}}
\newcommand{\rar}[2]{\rule[-.1in]{0in}{.2in}\xrightarrow{\makebox[#1][c]
{$\textstyle #2 $}}}
\newcommand{\lar}[2]{\rule[-.1in]{0in}{.2in}\xleftarrow{\makebox[#1][c]
{$\textstyle #2 $}}}
\begin{document}

\title{Extended Combinatorial Constructions for Peer-to-peer User-Private Information Retrieval}
\author{C.M. Swanson \and D.R. Stinson\thanks{Research supported by NSERC grant 203114-11}}
\institute{David C. Cheriton School of Computer Science\\ University of Waterloo}
\date{}

\maketitle

\begin{abstract}We consider user-private information retrieval (UPIR), an interesting alternative to private information retrieval (PIR) introduced by Domingo-Ferrer et al. In UPIR, the database knows which records have been retrieved, but does not know the identity of the query issuer. The goal of UPIR is to disguise user profiles from the database. Domingo-Ferrer et al.\ focus on using a peer-to-peer community to construct a UPIR scheme, which we term P2P UPIR. In this paper, we establish a strengthened model for P2P UPIR and clarify the privacy goals of such schemes using standard terminology from the field of privacy research. In particular, we argue that any solution providing privacy against the database should attempt to minimize any corresponding loss of privacy against other users. We give an analysis of existing schemes, including a new attack by the database. Finally, we introduce and analyze two new protocols. Whereas previous work focuses on a special type of combinatorial design known as a configuration, our protocols make use of more general designs. This allows for flexibility in protocol set-up, allowing for a choice between having a dynamic scheme (in which users are permitted to enter and leave the system), or providing increased privacy against other users.
\end{abstract}

\section{Introduction}
We consider the case of a user who wishes to maintain privacy when requesting documents from a database. One existing method to address this problem is \emph{private information retrieval (PIR)}. In PIR, the content of a given query is hidden from the database, but the identity of the user making the query is not protected. In this paper, we focus on an interesting alternative to PIR dubbed \emph{user-private information retrieval (UPIR)}, introduced by Domingo-Ferrer et al.~\cite{DBWM09}. UPIR, however, is only nominally related to PIR, in that it seeks to provide privacy for users of a database. In UPIR, the database knows which records have been retrieved, but does not know the identity of the person making the query. The problem that we address, then, is how to disguise user profiles from the point of view of the database. 

We draw some of our terminology from Pfitzmann~\cite{PH10}. Here we understand \emph{anonymity} as the state of not being identifiable within a set of subjects, and the \emph{anonymity set} is the set of all possible subjects. By \emph{untraceable} queries from the point of view of the database, we mean that the database cannot determine that a given set of queries belongs to the same user. One interesting caveat, which is addressed below, is that a set of queries might be deemed to come from the same user based on the subject matter of those queries. If the subject matter of a given set of queries is esoteric or otherwise unique, the database (or some other adversary) can surmise that the identity of the source is the same for all (or most) queries in this set; we call such a set of queries \emph{linked}. In the case of linked queries, we wish to provide as much privacy as possible, in the sense that we wish the database to have no probabilistic advantage in guessing the identity of the source of a given set of linked queries. In this way, we can say the user making the linked queries still has \emph{pseudonymity}---his identity is not known.

With this terminology in mind, we might better explain UPIR as a method of database querying that is privacy-preserving and satisfies the following properties from the point of view of the database:
\begin{enumerate}
	\item For any given user $U_i$, some (large) subset of all users $\mathcal{U}$ is the query anonymity set for $U_i$;
	\item User queries are anonymous; 
	\item User queries are untraceable;
	\item Given a set of queries that is unavoidably traceable due to subject matter, the person making the query is protected by pseudonymity.
\end{enumerate}

In addition to these basic properties of user-privacy against a database, we may wish to provide user-privacy against other users. Ideally, a UPIR scheme would provide the same privacy guarantees against other users as against the database, but we will see that this usually cannot be attained in practice.

Previous work~\cite{DB08,DBWM09,SB10,SB11} has focused on the use of a P2P network consisting of various encrypted ``memory spaces'' (i.e., drop boxes), to which users can post their own queries, submit queries to the database and post the respective answers, and read answers to previously posted queries. That is, in the P2P UPIR setting, we have a cooperating community of users who act as proxies to submit each other's queries to the database. In particular, a class of combinatorial  designs known as \emph{configurations} have been suggested by Domingo-Ferrer,  Bras-Amor\'{o}s et al.~\cite{DB08,DBWM09,SB10,SB11} as a way to specify the structure of the P2P network. In this work, we focus on P2P UPIR and consider the application of other types of designs in determining the structure of the P2P network. We introduce new P2P UPIR protocols and explore the level of privacy guarantees our protocols achieve, both against the database and against other users.

\subsection{Our Contributions}
\label{subsec: Contributions}

The main contributions of our work are as follows.

\begin{itemize}
	\item We establish a strengthened model for P2P UPIR and clarify the privacy goals of such schemes using standard terminology from the field of privacy research.
	\item We provide an analysis of the protocol introduced by Domingo-Ferrer and Bras-Amor\'{o}s~\cite{DB08,DBWM09}, as well as its subsequent variations. In particular, we reconsider the choice to limit the designs used as the basis for the P2P UPIR scheme to configurations. We provide a new attack on user-privacy against the database, which we call the \emph{intersection attack}, to which the above protocol variations are vulnerable.
	\item We introduce two new P2P UPIR protocols (and variations on these), and give an analysis of the user-privacy these protocols provide, both against the database and against other users. Our protocols utilize more general designs and resist the intersection attack by the database. In particular, our protocols provide more flexibility in designing the P2P network. 
	\item We consider the possible trade-offs of using different types of designs in the P2P UPIR setting, both with respect to the overall flexibility of the scheme as well as user-privacy. Our protocols provide viable design choices, which can allow for a \emph{dynamic UPIR scheme} (i.e., one in which users are permitted to enter and leave the system), or provide increased privacy against other users. 
	\item We consider the problem of user-privacy against other users in detail. In particular, we relax the assumptions of previous work, by allowing users to collaborate outside the parameters of the P2P UPIR scheme; that is, we consider a stronger adversarial model than previous work. We analyze the ability of different types of designs to provide user-privacy against other users, and explore how well our protocol resists an intersection attack launched by a coalition of users on linked queries. Finally, we introduce methods to improve privacy against other users without compromising privacy against the database.
	\end{itemize}
	
We now give an outline of our paper. In Section~\ref{sec: P2P UPIR Model}, we give a model for P2P UPIR schemes and provide the relevant privacy goals. Section~\ref{sec: Background on Designs} provides background information on designs. We then review previous work in Section~\ref{sec: Previous Work: Using Configurations} and give attacks on these protocols in Section~\ref{subsec: Attacks}. We introduce our protocols in Section~\ref{sec: Using More General Designs} and give an analysis of the privacy guarantees our protocols provide against the database. In Section~\ref{sec: Privacy Against Other Users}, we analyze the ability of our protocols to provide user-privacy against other users and consider ways to improve this type of privacy. We conclude in Section~\ref{sec: Conclusion}.

\section{Our P2P UPIR Model}
\label{sec: P2P UPIR Model}

A \emph{P2P UPIR scheme} consists of the following players: a finite set of possible \emph{users} $\mathcal{U} = \{U_1, \ldots, U_v\}$, the \emph{target database} DB, and an \emph{external observer}, $O$. We assume all communication in a P2P UPIR scheme is encrypted, including communication between the users and DB.

In a basic P2P UPIR scheme, users have access to secure drop boxes known as \emph{memory spaces}. More precisely, a \emph{memory space} is an abstract (encrypted) storage space in which some subset of users can store and extract queries and query responses; the exact structure of these spaces is not specified. We let $\mathcal{S} = \{S_1,\ldots, S_b\}$ denote the set of memory spaces, and we let $K_i$ denote the (symmetric) key associated with $S_i$, for $1 \leq i \leq b$. We assume that encryption keys for memory spaces are only known to a given subset of users, as specified by the P2P UPIR protocol. For the sake of simplicity, we assume that these keys are initially distributed in a secure manner by some trusted external entity (not the database DB). However, the precise method by which these keys are distributed is not relevant to the results we prove in this paper. If two distinct users $U_i, U_j \in \mathcal{U}$ have access to a common memory space, then we say $U_i$ and $U_j$ are \emph{neighbors}. Similarly, the \emph{neighborhood} of a user $U_i$ is defined as the set of all neighbors of $U_i$, and it is denoted as $N(U_i)$.

When a user $U_i$ wishes to send a query $q$ to DB, we say $U_i$ is the \emph{source} of the query. Rather than sending the query directly to DB, $U_i$ writes an encrypted copy of $q$, together with a requested proxy $U_j$, to a memory space $S_\ell$. Here $U_j$ is the \emph{proxy} for $U_i$'s query $q$, and consequently $U_j$ must know the encryption key $K_\ell$ corresponding to the memory space $S_\ell$. The user $U_j$ decrypts the query, re-encrypts $q$ under a secret key shared with DB, say $K_{DB}^j$, and forwards this re-encrypted query $e_{K_{DB}^j}(q)$ to DB. DB sends back a response, which $U_j$ first decrypts, then re-encrypts under $K_\ell$ and records in the memory space $S_\ell$. We give a schematic of the information flow of a basic P2P UPIR scheme in Figure~\ref{information flow}.

\begin{figure}[!h]
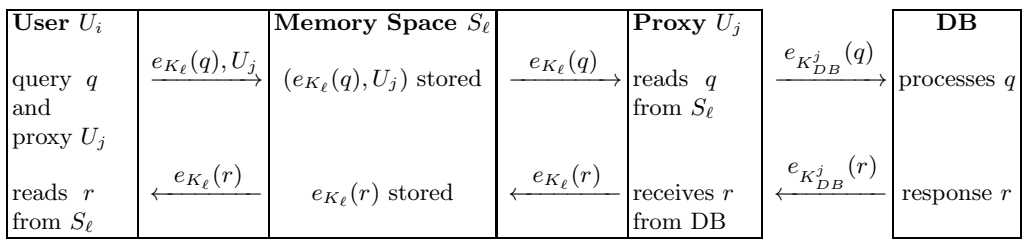
 \begin{center}
\caption{Schematic of Information Flow}
\vspace{.3cm}
\label{information flow}
\setlength{\tabcolsep}{2pt}
\begin{tabular}{|p{.63in}|c|c|c|p{.65in}|c|c|} \cline{1-5}\cline{7-7} 
		\textbf{User $U_i$} & &	\textbf{Memory Space $S_\ell$} & & \textbf{Proxy $U_j$} & &\textbf{DB}\\
		query $q$ \; \, and \; \; \; proxy $U_j$ & $\rar{1.3cm}{e_{K_\ell}(q), U_j}$ & $(e_{K_\ell}(q), U_j)$ stored & $\rar{1.3cm}{e_{K_\ell}(q)}$ & reads $q$ \; \, from $S_\ell$ & $\rar{1.3cm}{e_{K_{DB}^j}(q)}$ & processes $q$\\ 
		reads $r$ \; \, from $S_\ell$ & $\lar{1.3cm}{e_{K_\ell}(r)}$ & $e_{K_\ell}(r)$ stored & $\lar{1.3cm}{e_{K_\ell}(r)}$ &  receives $r$ \; \, from DB & $\lar{1.3cm}{e_{K_{DB}^j}(r)} $ &response $r$ \\
	\cline{1-5}\cline{7-7}
\end{tabular}
\end{center}
\end{figure}

\subsection{Attack Model}
\label{subsec: Attack Model}

We consider each type of player as a possible adversary $\mathcal{A}$. We assume that $\mathcal{A}$ has full knowledge of the P2P UPIR scheme specification, including any public parameters, as well as any secret information assigned to $\mathcal{A}$ as part of the P2P UPIR scheme. In addition, we assume $\mathcal{A}$ does not conduct traffic analysis. The following definition will be useful.

\begin{definition} Consider a set of one or more users $C$. \emph{The query sphere for $C$} is the set of memory spaces that $C$ can (collectively) access via the P2P UPIR scheme.
\end{definition}

In addition to the above, we make the following assumptions about each specific type of adversary $\mathcal{A}$:

\begin{itemize}
\item Suppose $\mathcal{A}$ is the database DB. As stated above, we assume that DB does not observe information being posted to or read from memory spaces. In addition, we assume that DB does not collaborate with any users and answers queries honestly. We note DB necessarily observes the content of all queries and the proxy of each query.

\item Suppose $\mathcal{A}$ consists of a user or a subset of colluding users $C \subset \mathcal{U}$. We assume users are honest-but-curious. Users in $C$ can communicate outside of the given P2P UPIR scheme and collaborate using joint information. The users of $C$ can see the content of any queries within $C$'s query sphere, but cannot identify the original source of these queries.

\item Suppose $\mathcal{A}$ is an external adversary $O$. An external observer $O$ can see the encrypted content of memory spaces. We consider the possibility of \emph{key leakage} as the main attack launched by $O$. This refers to a party gaining access to a memory space key outside of the P2P UPIR scheme specification (e.g., by social engineering or other means).
\end{itemize}

Although we do not specifically treat traffic analysis as an attack, we wish to avoid a trivial analysis of traffic entering and leaving a given memory space. That is, we assume that memory spaces have encryption and decryption capabilities, so that a user acting as a proxy may decrypt and re-encrypt a given query within its associated memory space, before forwarding the query to the database.

\subsection{Privacy and Adversarial Goals in P2P UPIR}
\label{subsec: Privacy in P2P UPIR}

In considering the privacy guarantees for a user $U_i$, we assume either the database DB or a group of other users may try to determine whether $U_i$ is the source of a given set of queries, or try to establish whether or not a given set of queries originates from the same source. We will need the following definition:

\begin{definition} We say two or more queries $q_1, q_2, \ldots$ are \emph{linked} if, given the subject matter, one can infer that the queries are likely to be from the same source.
\end{definition}

We also consider the possibility of an external observer $O$ gaining information that compromises the privacy of $U_i$. We recognize the following goals for $U_i$'s privacy:
\begin{itemize}
	\item \emph{Confidentiality}: the content of $U_i$'s queries is protected;
	\item \emph{Anonymity}: the identity of a query source is protected;
	\item \emph{Untraceability}: a user's query history cannot be reconstructed as having originated from the same user;
	\item \emph{Pseudonymity in the presence of linked queries}: given a set of linked queries, the identity of the source is protected.
\end{itemize}

We now analyze each type of adversary $\mathcal{A}$ with respect to the above privacy goals:

\begin{itemize}
	\item Suppose $\mathcal{A}$ is the database DB. We are not concerned with confidentiality against DB, but rather anonymity, untraceability, and pseudonymity in the presence of linked queries. The goal of the database is to create a profile of $U_i$. That is, the database would like to establish the  set of queries for which $U_i$ is the source. The database also attempts to trace user query histories; that is, DB would like to establish that a given set of queries came from the same source, even if DB cannot determine the identity of the source.
				
	\item Suppose $\mathcal{A}$ consists of a user or a subset of colluding users $C \subset \mathcal{U}$.  The coalition $C$ collaborates to try to determine the query history of another user $U_i \notin C$. Here we are interested in maintaining anonymity, untraceability, and pseudonymity in the presence of linked queries against $C$. We are also interested in maintaining confidentiality, in the sense that $C$ should not have access to the content of queries outside the query sphere for $C$. 
		
	\item Suppose $\mathcal{A}$ is an external adversary $O$. The goal of $O$ is to compromise both the confidentiality and the anonymity of $U_i$. External adversaries may try to compromise the encryption mechanism of the memory spaces. 
\end{itemize}

We are now almost ready to consider the P2P UPIR protocols of Domingo-Ferrer, Bras-Amor\'{o}s et al.~\cite{DB08,DBWM09}, as well as the subsequent modification of Stokes and Bras-Amor\'{o}s~\cite{SB10,SB11}. Both these protocols and ours, however, draw heavily from the field of combinatorial designs. In the next section, we introduce the requisite background knowledge on combinatorial designs.

\section{Background on Designs}
\label{sec: Background on Designs}
For a general reference on designs, we refer the reader to Stinson~\cite{S03}.

\begin{definition} A \emph{set system} is a pair $(X, A)$ such that the following are satisfied:
	\begin{enumerate}
		\item $X$ is a set of elements called \emph{points}, and
		\item $A$ is a collection (i.e., multiset) of nonempty proper subsets of $X$ called \emph{blocks}.
	\end{enumerate}
\end{definition}

In the rest of this section, we abuse notation by writing blocks in the form $abc$ instead of $\{a,b,c\}$.

\begin{definition} The \emph{degree} of a point $x \in X$ is the number of blocks containing $x$. If all points have the same degree, $r$, we say $(X,A)$ is \emph{regular (of degree $r$)}.
\end{definition}

\begin{definition} The \emph{rank} of $(X , A)$ is the size of the largest block. If all blocks contain the same number of points, say $k$, then $(X,A)$ is \emph{uniform (of rank $k$)}. Note that $k < v$.
\end{definition}

\begin{definition} A \emph{covering design} is a set system in which every pair of points occurs in at least one block.
\end{definition}

\begin{example} A covering design.\\
\[X = \{1, 2, 3, 4, 5, 6, 7\} \mbox{ and } A = \{13,23, 157, 124, 347, 356, 2567, 14567 \}.\]
\end{example}

\begin{definition}\label{PBD} A \emph{pairwise balanced design} (or \emph{PBD}) is a set system such that every pair of distinct points is contained in exactly $\lambda$ blocks, where $\lambda$ is a fixed positive integer. Note that any PBD is a covering design.
\end{definition}

\begin{example} A PBD with $\lambda = 2$.\\
\[X = \{1,2,3,4,5\} \mbox{ and } A = \{12, 25, 135, 145, 1234, 2345\}.\]
\end{example}

\begin{definition} Let $(X, A)$ be a regular and uniform set system of degree $r$ and rank $k$, where $|X| = v$ and $|A| = b$. We say $(X,A)$ is a \emph{$(v, b, r, k)$-1 design}.
\end{definition}

\begin{example} A $(5,5,3,3)$-1 design.\\
\[X = \{1,2,3,4,5\} \mbox{ and } A = \{123,451,234,512,345\}.\]
\end{example}

\begin{definition} A \emph{$(v, b, r, k, \lambda)$-balanced incomplete block design} (or \emph{BIBD}) is a $(v, b, r, k)$-1 design in which every pair of points occurs in exactly $\lambda$ blocks. Equivalently, a $(v, b, r, k, \lambda)$-BIBD is a PBD that is regular and uniform of degree $r$ and rank $k$.
\end{definition}

\begin{example} A $(10, 15, 6, 4, 2)$-BIBD.\\
\[X = \{0,1,2,3,4,5,6,7,8,9\}\] and
\begin{equation*}
\begin{split}
A &= \{0123,0147,0246,0358,0579,0689,1258,1369,\\
& \quad 1459,1678,2379,2489,2567,3478,3456\}.
\end{split}
\end{equation*}
\end{example}

\begin{definition} A \emph{$(v, b, r, k)$-configuration} is a $(v, b, r, k)$-1 design such that every pair of distinct points is contained in at most 1 block.
\end{definition}

\begin{example} A $(9,9,3,3)$-configuration.\\
\[X = \{1,2,3,4,5,6,7,8,9\} \mbox{ and } A = \{147, 258, 369, 159, 267, 348, 168, 249, 357\}.\]
\end{example}

\begin{remark} A $(v, b, r, k)$-configuration with $v = r(k-1) + 1$ is a $(v, b, r, k, 1)$-BIBD. A $(v, b, r, k, 1)$-BIBD with parameters of the form $(n^2 + n + 1, n^2 + n + 1, n+1, n+1, 1)$ is a \emph{finite projective plane of order $n$}.
\end{remark}

\begin{definition} A \emph{symmetric BIBD} is a BIBD in which $b = v$.
\end{definition}

\begin{remark} A projective plane is a symmetric BIBD.
\end{remark}

\begin{theorem}\emph{(Fisher's Inequality)} In any $(v,b,r,k,\lambda)$-BIBD, $b \geq v$.
\end{theorem}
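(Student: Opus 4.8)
The plan is to use a linear-algebra argument built on the incidence matrix of the design. Let $M$ be the $v \times b$ incidence matrix, whose rows are indexed by the points of $X$ and whose columns are indexed by the blocks of $A$, with $M_{x,B} = 1$ if $x \in B$ and $M_{x,B} = 0$ otherwise. The object to study is the product $MM^T$, a $v \times v$ real matrix whose $(x,y)$ entry counts the number of blocks containing both $x$ and $y$. Using the defining properties of a BIBD, the diagonal entries of $MM^T$ equal $r$ (each point lies in exactly $r$ blocks) and the off-diagonal entries equal $\lambda$ (each pair of distinct points lies in exactly $\lambda$ blocks), so $MM^T = (r - \lambda)I + \lambda J$, where $I$ is the $v \times v$ identity matrix and $J$ is the $v \times v$ all-ones matrix.

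First I would compute the determinant of $MM^T$. Since $J$ has eigenvalue $v$ on the all-ones vector and eigenvalue $0$ with multiplicity $v-1$, the matrix $MM^T$ has eigenvalue $r + (v-1)\lambda$ with multiplicity $1$ and eigenvalue $r - \lambda$ with multiplicity $v-1$; hence $\det(MM^T) = \bigl(r + (v-1)\lambda\bigr)(r-\lambda)^{v-1}$. The goal is then to show this determinant is nonzero, so that $MM^T$ is nonsingular and therefore has rank $v$. Because $\mathrm{rank}(MM^T) \leq \mathrm{rank}(M) \leq \min(v,b)$, a rank of $v$ forces $b \geq v$, which is exactly the claim.

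The main obstacle is establishing that $\det(MM^T) \neq 0$, and this reduces to showing $r > \lambda$ (the factor $r + (v-1)\lambda$ is manifestly positive since $r, \lambda \geq 1$). For this I would invoke the standard counting identity $r(k-1) = \lambda(v-1)$, obtained by fixing a point $x$ and double-counting the pairs $(y,B)$ with $y \neq x$ and $x, y \in B$: summing over the $r$ blocks through $x$ gives $r(k-1)$, while summing over the $v-1$ other points gives $\lambda(v-1)$. Since every pair of points occurs in $\lambda \geq 1$ blocks we have $k \geq 2$, and the rank condition $k < v$ from the definition gives $0 < k-1 < v-1$; dividing the identity then yields $r = \lambda(v-1)/(k-1) > \lambda$, so $r - \lambda > 0$ and the determinant is strictly positive. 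This completes the argument.
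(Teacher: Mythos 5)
Your argument is correct and complete: the computation $MM^T = (r-\lambda)I + \lambda J$, the eigenvalue/determinant calculation, and the use of $r(k-1)=\lambda(v-1)$ together with $k<v$ to get $r>\lambda$ are all sound, and the rank chain $\mathrm{rank}(MM^T)\leq\mathrm{rank}(M)\leq\min(v,b)$ correctly yields $b\geq v$. The paper itself states Fisher's Inequality without proof (deferring to the cited design-theory reference), so there is no internal proof to compare against; what you have written is the standard linear-algebra proof due to Bose, and it is exactly the argument one would expect here.
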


\begin{theorem} \label{symmetric intersection}In a symmetric BIBD any two blocks intersect in exactly $\lambda$ points.
\end{theorem}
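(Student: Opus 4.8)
The plan is to encode the design in its \emph{incidence matrix} and to exploit the fact that, for a symmetric BIBD, this matrix is square and invertible. Let $M$ be the $v \times b$ incidence matrix whose rows are indexed by points and columns by blocks, with $M_{x,B} = 1$ precisely when $x \in B$ and $M_{x,B} = 0$ otherwise. The defining properties of a $(v,b,r,k,\lambda)$-BIBD translate into the matrix identity $MM^T = (r - \lambda)I + \lambda J$, where $I$ is the $v \times v$ identity and $J$ is the all-ones matrix: the diagonal entries of $MM^T$ count the degree $r$ of each point, while the off-diagonal entries count the number $\lambda$ of blocks through each pair of distinct points. I would also record the row- and column-sum identities $MJ = rJ$ and $JM = kJ$, which follow from regularity and uniformity.

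First I would note that the standard counting relation $bk = vr$ together with the symmetry hypothesis $b = v$ forces $r = k$, so that $M$ is square and commutes with $J$. The crux of the argument is then to establish that $M$ is invertible. I would do this by computing $\det(MM^T)$ from the eigenstructure of $(r-\lambda)I + \lambda J$: the all-ones vector is an eigenvector with eigenvalue $r + \lambda(v-1)$, and its orthogonal complement contributes the eigenvalue $r - \lambda$ with multiplicity $v-1$. Simplifying $r + \lambda(v-1) = rk$ via the identity $\lambda(v-1) = r(k-1)$ gives $\det(MM^T) = rk\,(r-\lambda)^{v-1}$. Since $k < v$ forces $r > \lambda$, this determinant is nonzero, and hence $M$ is invertible. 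I expect this invertibility step to be the main obstacle, since it is precisely the symmetry hypothesis $b = v$ that makes $M$ square and thereby permits what follows.

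It remains to pass from the known product $MM^T$ to the product $M^TM$, whose off-diagonal entry in position $(B_i, B_j)$ is exactly $|B_i \cap B_j|$. From $MM^T = (r-\lambda)I + \lambda J$ I would write $M^T = (r-\lambda)M^{-1} + \lambda M^{-1}J$; applying $M^{-1}$ to the relation $MJ = rJ$ gives $M^{-1}J = \tfrac{1}{r}J$, so that $M^T = (r-\lambda)M^{-1} + \tfrac{\lambda}{r}J$. Multiplying on the right by $M$ and using $JM = kJ$ together with $r = k$ then yields $M^TM = (r-\lambda)I + \lambda J$. Reading off the off-diagonal entries gives $|B_i \cap B_j| = \lambda$ for every pair of distinct blocks $B_i, B_j$, which is the desired conclusion.
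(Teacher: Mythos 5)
The paper states this theorem without proof, treating it as a classical fact from design theory (with Stinson's book cited as the general reference), so there is no in-paper argument to compare against. Your proof is the standard incidence-matrix argument and it is correct: the identity $MM^T = (r-\lambda)I + \lambda J$, the determinant computation showing $M$ is invertible (using $r > \lambda$, which follows from $\lambda(v-1) = r(k-1)$ and $k < v$), and the passage to $M^TM = (r-\lambda)I + \lambda J$ via $M^{-1}J = \tfrac{1}{r}J$ and $r = k$ are all sound, and reading off the off-diagonal entries of $M^TM$ gives exactly the claimed intersection number $\lambda$.
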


\subsection{P2P UPIR using Combinatorial Designs}
\label{subsec: P2P UPIR using Combinatorial Designs}

We model a P2P UPIR scheme using a combinatorial design. That is, we consider pairs $(\mathcal{U}, \mathcal{S})$ (where as before $|\mathcal{U}| = v$ and $|\mathcal{S}| = b$ ), such that each memory space, or \emph{block}, consists of $k$ users and each user, or \emph{point}, is associated with $r$ memory spaces. That is, we assume that the pair $(\mathcal{U}, \mathcal{S})$ is a $(v, b, r, k)$-1 design.

We can also view the $b$ memory spaces as points and define $v$ blocks, each of which contains the memory spaces to which a given user belongs. This yields the dual design $(\mathcal{S}, \mathcal{U})$, which is a $(b, v, k, r)$-1 design.

\section{Previous Work: Using Configurations}
\label{sec: Previous Work: Using Configurations}

We briefly review the P2P UPIR scheme proposed by Domingo-Ferrer et al.~\cite{DBWM09} and the proposed modification of Stokes and Bras-Amor\'{o}s~\cite{SB11}.  We fix a $(v,b,r,k)$-configuration $(\mathcal{U}, \mathcal{S})$. As before, we have a finite set of users $\mathcal{U} = \{U_1, \ldots, U_v\}$, a database DB, and a finite set of memory spaces $\mathcal{S} = \{S_1, \ldots, S_b\}$.

Each user has access to $r$ memory spaces, and each memory space is accessible to $k$ users. Each memory space is encrypted via a symmetric encryption scheme; for each memory space, only the $k$ users assigned to that memory space are given the key. The following protocol~\cite{DBWM09} assumes the user $U_i$ has a query to submit to the database:

\begin{protocol}\emph{Domingo-Ferrer--Bras-Amor\'{o}s--Wu--Manj\'{o}n (DBWM) Protocol}\\
\label{DBWM protocol}  We fix a $(v,b,r,k)$-configuration.
\begin{enumerate}
	\item  The user $U_i$ randomly selects a memory space $S_\ell$ to which he has access
	\item  The user $U_i$ decrypts the content on the memory sector $S_\ell$ using the
corresponding key. His behavior is then determined by the content on the memory space as follows:
 		\begin{enumerate}
 			\item The content is garbage. Then $U_i$ encrypts his query and records it in $S_\ell$.
 			\item The content is a query posted by another user. Then $U_i$ forwards the query to the database and awaits the answer. When $U_i$ receives the answer, he encrypts it and records it in $S_\ell$. He then restarts the protocol with the intention to post his query.
 			\item \label{DBWM version} The content is a query posted by the user himself. Then $U_i$ does not forward the query to the database. Instead $U_i$ restarts the protocol with the intention to post his query.
 			\item The content is an answer to a query posted by another user. Then $U_i$ restarts the protocol with the intention to post his query;
			\item The content is an answer to a query posted by the user himself. Then $U_i$ reads the query answer and erases it from the memory space. Subsequently $U_i$ encrypts his new query and records it in $S_\ell$.
		\end{enumerate}
\end{enumerate}
\end{protocol}

The modification proposed by Stokes and Bras-Amor\'{o}s~\cite{SB11} replaces~\ref{DBWM version} as follows:
\begin{protocol} \emph{DBWM--Stokes (DBWMS) Protocol}
\label{DBWMS Protocol}
	\begin{enumerate}
		\item[$2'$(c).] If the content is a query posted by the user himself, then $U_i$ forwards the query to the database with a specified probability $p$. If $U_i$ forwards the query to the database, he records the answer in $S_\ell$. The user $U_i$ restarts the protocol with the intention to post his current query.
	\end{enumerate}
\end{protocol}

\begin{remark}
This protocol is ambiguous as stated by Stokes and Bras-Amor\'{o}s. The intent of the system is that users periodically run the protocol with ``garbage'' queries, in this way collecting the answers to their previous queries.
\end{remark}

Stokes and Bras-Amor\'{o}s~\cite{SB10,SB11} argue that the finite projective planes are the optimal configurations to use for P2P UPIR. Their argument is that privacy against the database is an increasing function of $r(k-1)$, since there are $r(k-1)$ users in the anonymity set of any given user $U_i$. That is, the query profile of $U_i$ is diffused among $r(k-1)$ other users in the neighborhood of $U_i$. Now, since $r(k-1) \leq v - 1$ in a configuration, the authors consider configurations satisfying $r(k-1) = v-1$, which yield the finite projective planes. In our protocols, introduced in Section~\ref{sec: Using More General Designs}, we also have neighborhoods of maximum size, without limiting ourselves to configurations. We also ensure that the database DB has no advantage in guessing the identity of the source of any given query.

\subsection{Attacks}
\label{subsec: Attacks}

We consider the privacy properties of the DBWM and DBWMS protocols with respect to the database, before offering an improved protocol in Section~\ref{sec: Using More General Designs}. We fix a $(v,b,r,k)$-configuration, where $v$ is the number of users and $b$ is the number of memory spaces. We associate a block with each memory space, where the block consists of the users that have access to the memory space. 

The weakness of the DBWM and DBWMS protocols lies in the possibility of a user's query history being identifiable as originating from one user. That is, if a series of queries is on some esoteric subject, the adversary (such as the database) can surmise that the source of these queries is the same. As before, we refer to such queries as \emph{linked}.

Stokes and Bras-Amor\'{o}s~\cite{SB11} noticed a weakness in the DBWM protocol when a projective plane is used as the configuration, that is, when $v = r(k-1) + 1$. In this case, each user $U_i$ has a neighborhood consisting of all other users. Then, given a large enough set of linked queries, the only user who never submits one of these linked queries is the source, $U_i$. That is, the database can eventually identify $U_i$ as the source. Subsequent to our research, Stokes and Bras-Amor\'{o}s~\cite{SB11ws} noted this weakness applies more generally to $(v,k,1)$-BIBDs.

We introduce another type of attack, which we call the \emph{intersection attack}. This attack only applies to configurations satisfying $v > r(k-1)+1$, as it requires that all users have neighborhoods of cardinality less than $v - 1$. The idea behind the intersection attack is that, given a query $q_1$ submitted by proxy $U_j$, an attacker can, by analyzing the neighborhood of $U_j$, compute a list of possible sources $Q_1$. If the attacker has access to a set of linked queries $q_1, q_2, \ldots, q_n$, and the neighborhoods of these users do not consist of all users in the system, the intersection of the possible source sets $Q_1, Q_2, \ldots, Q_n$ can perhaps identify the source (or narrow down the list of possible sources). We demonstrate this attack in the following example.

\begin{example}
Suppose $v=12$ and $b=8$ and we have the following blocks (memory spaces):
\[ 
\begin{tabular}{llll}
$\{U_1, U_2,U_3\}$ & $\{U_4,U_5,U_6\}$ & $\{U_7,U_8,U_9\}$ & $\{U_{10},U_{11},U_{12}\}$ \\
$\{U_1, U_4, U_7\}$ & $\{U_2, U_5, U_{10}\}$ & $\{U_3, U_8, U_{11}\}$ & $\{U_6, U_9, U_{12}\}$
\end{tabular}
\]

Note this is a $(12, 8, 2, 3)$-configuration. We consider the DBWM protocol here; that is, we assume that the proxy of a given query is always different from the source of the query. Now suppose three queries are transmitted from users $U_2, U_{11}$, and $U_8$. 
\begin{itemize}
\item If the proxy is $U_2$, then the source $U_i \in \{U_1,U_3,U_5,U_{10}\}$.
\item If the proxy is $U_{11}$, then the source $U_i \in \{U_3,U_8,U_{10},U_{12}\}$.
\item If the proxy is $U_8$, then the source $U_i \in \{U_3,U_7,U_9,U_{11}\}$.
\end{itemize}
Suppose that the subject of the queries is similar, so it can be inferred that the source of the three queries is probably the same user. Then it is easy to identify the source of the queries: \[ U_i \in \{U_1,U_3,U_5,U_{10}\} \cap \{U_3,U_8,U_{10},U_{12}\} \cap \{U_3,U_7,U_9,U_{11}\},\] so $U_i = U_3$. Clearly, user-privacy with respect to the database is not achieved here.
\end{example}

We do not claim that the above-described attack will always work for any configuration; it is easy to come up with examples where the attack does not work.  For example, suppose that $N(U_i) \cup \{U_i\} = N(U_j) \cup \{U_j\}$ for two distinct users $U_i$ and $U_j$. Then it would be impossible for DB to determine whether $U_i$ or $U_j$ is the source of a sequence of linked queries. Independently of this research, Stokes and Bras-Amor\'{o}s~\cite{SB11ws} noted that by choosing the configuration carefully, it is possible to ensure the neighborhood-to-user mapping is not unique, and to guarantee a specified lower bound on the number of possible users for a given neighborhood.

Observe that the intersection attack is not useful when one uses a finite projective plane as the configuration and users are allowed to submit their own queries. This follows because, at each stage of the intersection attack, the set of possible sources includes all users in the set system. In the next section, we formalize this observation and discuss the use of more general types of designs in P2P UPIR protocols that resist the intersection attack in a very strong sense.

\section{Using More General Designs}
\label{sec: Using More General Designs}

As observed in Section~\ref{sec: Previous Work: Using Configurations}, in order to achieve user-privacy with respect to the database, we need to allow users to sometimes transmit their own queries. We suggest a different solution to the problem than that given by Bras-Amor\'{o}s et al., however. In particular, we see no reason to limit the P2P network topology to configurations. Bras-Amor\'{o}s et al.\ indicate use of configurations as a method to increase service availability and decrease the number of required keys. Indeed, configurations were proposed as key rings in wireless sensor networks by Lee and Stinson~\cite{LS05} due to memory constraints of sensor nodes. However, storage constraints are not so much an issue in P2P UPIR. We therefore consider the possibility of using other types of designs.

We will make use of memory spaces that ``balance'' proxies for every source. We suggest to use a balanced incomplete block design (BIBD) for the set of memory spaces. We will show that these designs provide optimal resistance against the intersection attack.

Our scheme also differs from DBWMS in the treatment of proxies. In the previous schemes, the identity of a proxy was not specified by the source. Queries were simply forwarded to the database by whichever user had most recently checked the corresponding memory space. We propose that each source designates the proxy for each query. This enables us to balance the proxies for each possible source, thereby providing ``perfect'' anonymity with respect to the database. Moreover, we do not assume that each memory space holds only a single query; rather, we assume that memory spaces are capable of storing multiple queries.

\begin{protocol}\label{proxy-designated version 1}\emph{Proxy-designated BIBD Protocol (Version 1)}\\ We fix a $(v,b,r,k,\lambda)$-BIBD. To submit a query, a user $U_i$ uses the following steps:
\begin{enumerate}
\item \label{self-submission} With probability $1/v$, user $U_i$ acts as his own proxy and transmits his own query to the DB.
\item Otherwise, user $U_i$ chooses uniformly at random one of the $r$ memory spaces with which he is associated, say $S_{\ell}$, and then he chooses uniformly at random a user $U_j \in S_{\ell} \backslash \{U_i\}$. Finally, user $U_i$ requests that user $U_j$ act as his proxy using the memory space $S_{\ell}$.
\end{enumerate}
\end{protocol}

\begin{protocol}\label{proxy-designated version 2}\emph{Proxy-designated BIBD Protocol (Version 2)}\\We fix a $(v,b,r,k,\lambda)$-BIBD. To submit a query, a user $U_i$ uses the following steps:
\begin{enumerate}
\item With probability $1/v$, user $U_i$ chooses to act as his own proxy. User $U_i$ then writes the query uniformly at random to one of the $r$ memory spaces with which he is associated, and transmits his own query to DB.
\item Otherwise, user $U_i$ chooses uniformly at random one of the $r$ memory spaces with which he is associated, say $S_{\ell}$, and then he chooses uniformly at random a user $U_j \in S_{\ell} \backslash \{U_i\}$. Finally, user $U_i$ requests that user $U_j$ act as his proxy using the memory space $S_{\ell}$.
\end{enumerate}
\end{protocol}

\begin{remark} We note that Protocol~\ref{proxy-designated version 2} differs from Protocol~\ref{proxy-designated version 1} only in the first step.
\end{remark}

\begin{remark} We assume users check memory spaces regularly and act as proxies as requested within a reasonable time interval.
\end{remark}

\begin{remark}\label{assumption} We make the assumption that, when a source $U_i$ requests $U_j$ to be his proxy, everyone in the associated memory space knows that this request has been made, but no one (except for $U_i$) knows the identity of the source.
\end{remark}

\begin{remark} The choice between Protocol~\ref{proxy-designated version 1} and Protocol~\ref{proxy-designated version 2} impacts the amount of privacy the scheme provides against other users. This will be discussed in Section~\ref{sec: Privacy Against Other Users}.
\end{remark}

We analyze the situation from the point of view of the database. For the rest of the paper, we let variables $\mathbf{S}, \mathbf{P}, \mathbf{M}$ be random variables for source, proxy, and memory space, respectively.

\begin{theorem} 
From the point of view of the database, the Proxy-designated BIBD Protocols (Protocols~\ref{proxy-designated version 1}~and~\ref{proxy-designated version 2}) satisfy $\Pr[\mathbf{S}= U_i|\mathbf{P} = U_j] = \Pr[\mathbf{S}=U_i]$ for all $U_i, U_j \in \mathcal{U}$. 
\end{theorem}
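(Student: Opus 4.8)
The plan is to prove the stronger statement that the proxy $\mathbf{P}$ is uniformly distributed on $\mathcal{U}$ and is statistically independent of the source $\mathbf{S}$; the claimed identity $\Pr[\mathbf{S} = U_i \mid \mathbf{P} = U_j] = \Pr[\mathbf{S} = U_i]$ then follows immediately from Bayes' rule. The reduction I would use is that independence of $\mathbf{S}$ and $\mathbf{P}$ is equivalent to the conditional law $\Pr[\mathbf{P} = U_j \mid \mathbf{S} = U_i]$ being the same for every source $U_i$. So the whole problem reduces to computing this one conditional distribution and checking that it does not depend on $i$.

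First I would compute $\Pr[\mathbf{P} = U_j \mid \mathbf{S} = U_i]$ directly from the protocol, splitting into two cases. If $U_j = U_i$, the only way the proxy can be $U_i$ is the self-proxy branch (step~1), which occurs with probability $1/v$; this value is the same for both Protocol~\ref{proxy-designated version 1} and Protocol~\ref{proxy-designated version 2}, since they differ only in whether $U_i$ additionally writes to a memory space, an action the database does not observe and which does not alter the identity of the proxy. If $U_j \neq U_i$, then $U_i$ must take the ``otherwise'' branch (probability $(v-1)/v$), select a memory space $S_\ell$ containing both $U_i$ and $U_j$, and then select $U_j$ out of the $k-1$ candidates in $S_\ell \backslash \{U_i\}$.

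The key combinatorial step is counting the memory spaces that contain both $U_i$ and $U_j$: by the defining property of a $(v,b,r,k,\lambda)$-BIBD, every pair of distinct points lies in exactly $\lambda$ blocks, so there are exactly $\lambda$ such spaces. Each is chosen with probability $1/r$ and then $U_j$ with probability $1/(k-1)$, giving $\Pr[\mathbf{P} = U_j \mid \mathbf{S} = U_i] = \frac{v-1}{v} \cdot \frac{\lambda}{r(k-1)}$. At this point I would invoke the standard BIBD parameter identity $r(k-1) = \lambda(v-1)$ (obtained by double-counting the flags $(y, B)$ with $y \neq U_i$ and $U_i, y \in B$), which collapses the expression to exactly $1/v$. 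Hence $\Pr[\mathbf{P} = U_j \mid \mathbf{S} = U_i] = 1/v$ in both cases, independently of $i$.

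Finally, I would assemble the conclusion: since $\Pr[\mathbf{P} = U_j \mid \mathbf{S} = U_i] = 1/v$ for all $i$, the law of total probability gives $\Pr[\mathbf{P} = U_j] = \sum_i \Pr[\mathbf{P} = U_j \mid \mathbf{S} = U_i]\,\Pr[\mathbf{S}=U_i] = 1/v$, and Bayes' rule then yields $\Pr[\mathbf{S}=U_i \mid \mathbf{P}=U_j] = \Pr[\mathbf{S}=U_i]$ for all $i,j$. I expect the only real obstacle to be the $U_j \neq U_i$ count together with the identity $r(k-1)=\lambda(v-1)$; everything else is bookkeeping. A point worth stating explicitly is that the argument assumes nothing about the prior distribution of $\mathbf{S}$, so the guarantee holds against a database with arbitrary beliefs about who is querying.
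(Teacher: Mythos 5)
Your proposal is correct and follows essentially the same route as the paper: compute $\Pr[\mathbf{P}=U_j\mid \mathbf{S}=U_i]$ case by case, use the fact that exactly $\lambda$ blocks contain the pair $\{U_i,U_j\}$ together with the identity $\lambda(v-1)=r(k-1)$ to get the constant value $1/v$, and finish with Bayes' rule. The one small (and welcome) refinement is that you obtain $\Pr[\mathbf{P}=U_j]=1/v$ by total probability from the constant conditional, which makes explicit that no assumption on the prior distribution of $\mathbf{S}$ is needed, whereas the paper computes the marginal directly as $r/(bk)$.
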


\begin{proof}
First, the schemes ensure that $\Pr[\mathbf{P}=U_j|\mathbf{S}=U_i] = \frac 1v$ for all $U_i,U_j$. To see this, first note that $U_i$ will pick himself as the source with probability $\frac 1 v$. In Protocol~\ref{proxy-designated version 1}, $U_i$ will then submit his query directly to the database. In Protocol~\ref{proxy-designated version 2}, $U_i$ will pick one of the $r$ memory spaces with which he is associated uniformly at random and then act as his own proxy. So in both cases, we have \[\Pr[ \mathbf{P} = U_i | \mathbf{S} = U_i] = \frac 1 v.\]

Then in both protocols, with probability $\frac{v-1}{v}$, user $U_i$ will pick a memory space $S_\ell$ (with $U_i \in S_\ell$) uniformly at random, followed by a proxy $U_j$ associated with $S_\ell$. 
The probability that a fixed $U_j$ with $i \neq j$ will act as proxy can be computed as follows. 

For $i \neq j$, we have

\begin{eqnarray*}
\Pr[\mathbf{P}=U_j | \mathbf{S} = U_i] & = & \frac{v-1}{v} \sum_{S_\ell :U_i, U_j \in S_\ell} \Pr[\mathbf{M} = S_\ell]\Pr[\mathbf{P}=U_j| \mathbf{M} = S_\ell ]\\
& = & \frac{v-1}{v} \sum_{S_\ell : U_i, U_j \in S_\ell} \frac{1}{r(k-1)} =  \left(\frac{v-1}{v}\right)\left(\frac {\lambda}{r(k-1)} \right)= \frac 1 v.
\end{eqnarray*}

Similarly, we can see that $\Pr[\mathbf{P} = U_j] = 1/v$ for all $U_j \in \mathcal{U}$:

\begin{eqnarray*}
\Pr[\mathbf{P}=U_j ] & = & \sum_{S_\ell : U_j \in S_\ell} \Pr[\mathbf{M} = S_\ell ]\Pr[\mathbf{P}=U_j| \mathbf{M} = S_\ell]  = \frac{r}{bk} = \frac 1 v.
\end{eqnarray*}

Now we have 
\[\Pr[\mathbf{S}= U_i|\mathbf{P} = U_j] = \frac{\Pr[\mathbf{P}=U_j|\mathbf{S}=U_i]\Pr[\mathbf{S} = U_i]}{ \Pr[\mathbf{P} = U_j]} = \Pr[\mathbf{S}=U_i]\] so the identity of the proxy gives no information about the identity of the source. \end{proof}

We observe that this analysis is independent of any computational assumptions, so the security is unconditional. Since we have achieved a perfect anonymity property, it follows that no information is obtained by analyzing linked queries.

\begin{example}
\label{projective plane}
To illustrate, consider a projective plane of order 2 with the following blocks:
\[ 
\begin{tabular}{llll}
$\{U_1,U_2,U_3\}$ & $\{U_1,U_4,U_5\}$ & $\{U_1, U_6, U_7\}$ &  $\{U_2,U_4,U_6\}$ \\
$\{U_2,U_5,U_7\}$ & $\{U_3,U_4,U_7\}$ & $\{U_3,U_5,U_6\}$ &
\end{tabular}
\]

We note that this is a $(7, 3, 3, 3, 1)$-BIBD. Suppose that the first query uses block $\{U_2,U_4,U_6\}$ with proxy $U_4$, and the second query uses block $\{U_2,U_5,U_7\}$ with proxy $U_2$. From the first query, DB knows that one of three blocks were used: $\{U_1, U_4, U_5\}$, $\{U_2,U_4,U_6\}$, or $\{U_3,U_4,U_7\}$. However, $\Pr[S=U_i | P=U_4] = \Pr[S=U_i]$  for all possible sources $U_i$, so DB has no additional information about the identity of the source, given that $P=U_4$.  From the second query, DB knows that one of three blocks were used: $\{U_1,U_2,U_3\}$, $\{U_2,U_4,U_6\}$, or $\{U_2,U_5,U_7\}$. Again, $\Pr[S=U_i | P=U_2] = \Pr[S=U_i]$ for all possible sources $U_i$, so DB has no additional information about the identity of the source, given that $P=U_2$. So even if DB suspects that both queries came from the same source, he has no way to identify the source.
\end{example}

\subsection{Extensions}
\label{subsec: Extensions}

We can consider using less structured designs than BIBDs, such as pairwise balanced designs or covering designs. It turns out that we can still achieve perfect anonymity with respect to DB, because our anonymity argument remains valid provided that $\Pr[P = U_j | S = U_i] = \frac1v$ for all $U_i, U_j \in \mathcal{U}$.

We next give a generalized protocol based on an arbitrary covering design. That is, we do not require constant block size $k$ or constant replication number $r$.

\begin{protocol}\label{covering design version 1}\emph{Proxy-designated Covering Design Protocol (Version 1)}\\
We fix a covering design. To submit a query, a user $U_i$ performs the following steps:
\begin{enumerate}
\item \label{pick proxy}User $U_i$ chooses the designated proxy $U_j$ uniformly at random. If $U_i = U_j$, then $U_i$ submits his query directly to DB and skips Step~\ref{pick space}.
\item \label{pick space} If $U_i \neq U_j$, then user $U_i$ chooses uniformly at random one of the memory spaces that contains both $U_i$ and $U_j$, say $S_{\ell}$. Then $U_i$ requests that user $U_j$ act as his proxy using memory space $S_{\ell}$.
\end{enumerate}
\end{protocol}

\begin{protocol}\label{covering design version 2}\emph{Proxy-designated Covering Design Protocol (Version 2)}\\
We fix a covering design. To submit a query, a user $U_i$ performs the following steps:
\begin{enumerate}
\item \label{pick proxy2}User $U_i$ chooses the designated proxy $U_j$ uniformly at random. (The user $U_i$ may choose himself as the proxy $U_j$.)
\item \label{pick space 2}User $U_i$ chooses uniformly at random one of the memory spaces that contains both $U_i$ and $U_j$, say $S_{\ell}$. Then $U_i$ requests that user $U_j$ act as his proxy using memory space $S_{\ell}$.
\end{enumerate}
\end{protocol}

\begin{remark} If the covering design is a BIBD, then Protocol~\ref{covering design version 1} is equivalent to Protocol~\ref{proxy-designated version 1} and  Protocol~\ref{covering design version 2} is equivalent to Protocol~\ref{proxy-designated version 2}.
\end{remark}

\begin{remark}
We must have a covering design to ensure that a suitable memory space $S_\ell$ always exists in Step~\ref{pick space} of Protocols~\ref{covering design version 1}~and~\ref{covering design version 2}.
\end{remark}

\begin{remark}\label{assumption covering design}
As in Protocols~\ref{proxy-designated version 1}~and~\ref{proxy-designated version 2}, we assume users check memory spaces regularly, and act as proxies as requested within a reasonable time interval. We also assume, as before, that when source $U_i$ requests that $U_j \neq U_i$ be his proxy, everyone in the associated memory space knows that this request has been made, but no one (except for $U_i$) knows the identity of the source.
\end{remark}

\begin{theorem} From the point of view of the database, for a given query, the Proxy-designated Covering Design Protocols (Protocols~\ref{covering design version 1}~and~\ref{covering design version 2}) satisfy $\Pr[\mathbf{S}= U_i|\mathbf{P} = U_j] = \Pr[\mathbf{S}=U_i]$ for all $U_i, U_j \in \mathcal{U}$. 
\end{theorem}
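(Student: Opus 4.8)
The plan is to mirror the structure of the proof of the BIBD theorem, exploiting the observation made at the start of Section~\ref{subsec: Extensions}: the anonymity conclusion follows by Bayes' theorem as soon as one establishes $\Pr[\mathbf{P}=U_j \mid \mathbf{S}=U_i] = \frac{1}{v}$ for every pair $U_i, U_j \in \mathcal{U}$. So the entire task reduces to verifying this single uniformity condition for the covering-design protocols, after which the marginal $\Pr[\mathbf{P}=U_j]$ and the final identity come essentially for free.

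First I would establish the uniformity condition. The key point --- and what makes the covering-design case cleaner than the BIBD case --- is that here the proxy is selected \emph{first} and \emph{uniformly at random} from all $v$ users, before any memory space is chosen. In Protocol~\ref{covering design version 1}, source $U_i$ picks $U_j$ uniformly; if $U_j = U_i$ he submits directly (so the realized proxy is $U_i$), and otherwise he routes through a memory space containing both (so the realized proxy is the chosen $U_j$). In either case the proxy equals the uniformly chosen candidate, giving $\Pr[\mathbf{P}=U_j \mid \mathbf{S}=U_i] = \frac{1}{v}$ for all $U_j$, including $U_j = U_i$. Protocol~\ref{covering design version 2} is identical in this respect, the only difference being that a memory space is always selected; the realized proxy is again the uniformly chosen $U_j$. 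Crucially, no balance property of the design is invoked: unlike the BIBD proof, where uniformity of the proxy had to be recovered from the $\lambda$-balance \emph{after} first choosing a memory space, here uniformity is built directly into Step~\ref{pick proxy} (resp.\ Step~\ref{pick proxy2}).

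With the conditional uniformity in hand, I would compute the marginal by the law of total probability,
\[
\Pr[\mathbf{P}=U_j] = \sum_{U_m \in \mathcal{U}} \Pr[\mathbf{P}=U_j \mid \mathbf{S}=U_m]\,\Pr[\mathbf{S}=U_m] = \frac{1}{v}\sum_{U_m \in \mathcal{U}}\Pr[\mathbf{S}=U_m] = \frac{1}{v},
\]
and then apply Bayes' theorem exactly as in the BIBD proof:
\[
\Pr[\mathbf{S}=U_i \mid \mathbf{P}=U_j] = \frac{\Pr[\mathbf{P}=U_j \mid \mathbf{S}=U_i]\,\Pr[\mathbf{S}=U_i]}{\Pr[\mathbf{P}=U_j]} = \Pr[\mathbf{S}=U_i].
\]

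There is no substantial analytic obstacle; the proof is genuinely shorter than its BIBD predecessor. The only points requiring care are bookkeeping rather than difficulty: I must treat the self-proxy case ($U_j=U_i$) consistently across both protocol versions so that the full probability mass is accounted for and none leaks, and I should invoke the covering property only where it is actually needed --- namely to guarantee (per Remark~\ref{assumption covering design} and the accompanying feasibility remark) that a memory space containing both $U_i$ and $U_j$ always exists, so that the routing step in Step~\ref{pick space} (resp.\ Step~\ref{pick space 2}) never fails. Notably, the covering property plays no role in the probability computation itself; it serves only to keep the protocol well-defined, which is precisely why the argument extends from BIBDs to arbitrary covering designs.
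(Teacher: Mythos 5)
Your proposal is correct and follows essentially the same route as the paper's proof: both observe that Step~\ref{pick proxy} (resp.\ Step~\ref{pick proxy2}) directly yields $\Pr[\mathbf{P}=U_j\mid\mathbf{S}=U_i]=\frac{1}{v}$, deduce $\Pr[\mathbf{P}=U_j]=\frac{1}{v}$, and conclude via Bayes' theorem, with the covering property serving only to keep the memory-space selection well-defined. You simply spell out the details that the paper leaves as ``we can see,'' which is fine.
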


\begin{proof}Step~\ref{pick proxy} of both Protocol~\ref{covering design version 1} and Protocol~\ref{covering design version 2} ensures that $\Pr[\mathbf{P} = U_j | \mathbf{S} = U_i] = \frac1v$ for all $U_i, U_j \in \mathcal{U}$. Similarly, we can see that $\Pr[\mathbf{P}=U_j] = \frac1v$ for all $U_j$. We once again have \[\Pr[\mathbf{S}= U_i|\mathbf{P} = U_j] = \frac{\Pr[\mathbf{P}=U_j|\mathbf{S}=U_i]\Pr[\mathbf{S} = U_i]}{ \Pr[\mathbf{P} = U_j]} = \Pr[\mathbf{S}=U_i]\] so the identity of the proxy gives no information about the identity of the source. \end{proof}

As before, we observe that this analysis is independent of any computational assumptions, so the security is unconditional. Since we have achieved a perfect anonymity property, no information is obtained by analyzing linked queries.

\subsection{Dynamic P2P UPIR Schemes}
One benefit of using less structured designs than BIBDs is that the scheme can be \emph{dynamic}. That is, we can add and remove users, which allows greater flexibility in practice. 

To delete a user $U_i$ from Protocols~\ref{covering design version 1}~and~\ref{covering design version 2}, we simply remove $U_i$ from all the memory spaces with which he is associated. To avoid $U_i$ from reading any more queries written to these memory spaces, we also need a rekeying mechanism to update the associated keys. The same external entity that distributed the initial set of keys could be responsible for rekeying. The end result is a covering design with one fewer users than before. 

To add a user $U_{\operatorname{new}}$ in Protocols~\ref{covering design version 1}~and~\ref{covering design version 2}, we may use the following method. We first find  $\mathcal{M} = \{S_{h_1},\ldots, S_{h_\ell}\} \subseteq \mathcal{S}$ such that $S_{h_1} \cup \cdots \cup S_{h_\ell} = \mathcal{U}$. That is, we need a set of memory spaces whose union contains all current users. A greedy algorithm could be used to accomplish this task, although the resultant set $\mathcal{M}$ would likely not be optimal. Indeed, finding the minimum such set is NP-hard. (This is the minimum cover problem, which is problem SP5 in Garey and Johnson~\cite{GJ79}.)

Once we have identified a suitable set $\mathcal{M}$, we simply add $U_{\operatorname{new}}$ to each memory space in $\mathcal{M}$, and give $U_{\operatorname{new}}$ the associated keys. In addition, we need a mechanism by which to inform all users of $U_{\operatorname{new}}$'s presence in the scheme. The resulting set system is still a covering design---one which contains one more user than before.

\section{Privacy Against Other Users}
\label{sec: Privacy Against Other Users}
In this section, we consider our Protocols~\ref{proxy-designated version 1},~\ref{proxy-designated version 2},~\ref{covering design version 1},~and~\ref{covering design version 2} in the context of analyzing user-privacy against other users. We remind the reader of Remarks~\ref{assumption}~and~\ref{assumption covering design}: we assume that when a source $U_i$ requests that $U_j$ be his proxy, everyone in the associated memory space knows that this request has been made, but no one (except for $U_i$) knows the identity of the source. We now analyze the privacy of a given user relative to other users of the scheme. As we will see, if we wish to provide privacy against other users, a design that has more structure than a general covering design becomes useful. In particular, we will observe that the use of a regular PBD (see Definition~\ref{PBD}) in Protocols~\ref{covering design version 1}~and~\ref{covering design version 2} is desirable.

It is helpful to begin with an example: 

\begin{example}\label{example privacy against other users} 
Consider the projective plane from Example~\ref{projective plane} and suppose we use Protocol~\ref{proxy-designated version 1}. Suppose that user $P=U_4$ is requested to make a query in memory space $\{U_1, U_4, U_5\}$ by source $S=U_1$.
User $U_4$ knows that the source must be $U_1$ or $U_5$ (since he did not make the request himself). User $U_5$, however, knows that the source must be $U_1$ because 
\begin{enumerate}
\item $U_5$ did not make the request himself, and 
\item $U_4$ would not post a request to himself to transmit a query---he would just go ahead and transmit it himself.
\end{enumerate}
\end{example}

We can generalize the concept from Example~\ref{example privacy against other users}. Observe that in Protocols~\ref{proxy-designated version 1}~and~\ref{covering design version 1}, the requested proxy can rule out one possible source, and anyone else in the memory space (who is not the source) can rule out two possible sources. If we consider Protocols~\ref{proxy-designated version 2}~and~\ref{covering design version 2}, then users can rule out only one possible source (namely, themselves). That is, Protocols~\ref{proxy-designated version 2}~and~\ref{covering design version 2} improve the information theoretic privacy guarantees of the scheme with respect to the viewpoint of other users. However, we remark that in these versions, when a source acts as his own proxy, other users associated with the chosen memory space can see the content of the query. In Protocols~\ref{proxy-designated version 1}~and~\ref{covering design version 1}, if a user $U_i$ is both the source and proxy of a given query, then $U_i$ is the only user who sees the content of that query. Hence it may still be desirable to use Protocols~\ref{proxy-designated version 1}~and~\ref{covering design version 1}, if additional confidentiality is required.

An interesting related question is, when a particular user $U_t$ sees a query $q$ posted to the memory space $S_\ell$ that is not his own, whether or not $U_t$ has a probabilistic advantage in guessing the source of $q$. The following theorems show that, in order to minimize any such advantage, it is helpful to use a regular PBD in our protocols.

\begin{theorem}\label{PBD theorem prot 5}
Let $(X, A)$ be a regular PBD of degree $r$. Assume $(X, A)$ is used in the Proxy-designated Covering Design Protocol (Protocol~\ref{covering design version 1}) and assume that $\Pr[\mathbf{S}=U_i] = \frac1v$ for all $U_i \in \mathcal{U}$. Suppose $U_t \in S_\ell$ sees a query $q$ posted to $S_\ell$ that is not his own. Then, from the point of view of $U_t$, for a given query $q$ and $U_i, U_j \in S_\ell$ such that $i \neq t$, it holds that 

\[\Pr[\mathbf{S} = U_i | \mathbf{M} = S_\ell, \mathbf{P}= U_j] = 
		\left\{
			\begin{array}{rl}
 				0 & \mbox{ if } i=j\\
 				\frac{1}{|S_\ell|-2} & \mbox{ if } i \neq j.
 			\end{array} \right. \]
\end{theorem}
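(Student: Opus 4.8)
The plan is to compute the posterior $\Pr[\mathbf{S} = U_i \mid \mathbf{M} = S_\ell, \mathbf{P} = U_j]$ seen by $U_t$ directly from Bayes' rule, exploiting the fact that in a PBD every pair of points lies in exactly $\lambda$ blocks so that all the relevant likelihoods coincide. First I would pin down the support of the posterior. Since we use Protocol~\ref{covering design version 1}, a query is written to a memory space only when the source and proxy differ (if $U_i$ picks himself as proxy in Step~\ref{pick proxy}, he submits directly and never uses a memory space). Hence observing $\mathbf{M} = S_\ell$ together with $\mathbf{P} = U_j$ forces the source to satisfy $\mathbf{S} \neq U_j$ and $\mathbf{S} \in S_\ell$ (the source must belong to any memory space it uses, by Step~\ref{pick space}). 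This immediately yields the case $i=j$, namely $\Pr[\mathbf{S} = U_j \mid \mathbf{M} = S_\ell, \mathbf{P} = U_j] = 0$. From $U_t$'s vantage point I would additionally fold in $U_t$'s knowledge that the query is not his own, i.e. $\mathbf{S} \neq U_t$; I take $U_t$ to be a user other than the proxy ($t \neq j$), which is exactly the ``rule out two sources'' case flagged in the discussion preceding the theorem (the proxy himself can only rule out one source, leaving $|S_\ell|-1$ candidates).

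Next I would evaluate the single likelihood $\Pr[\mathbf{M} = S_\ell, \mathbf{P} = U_j \mid \mathbf{S} = U_i]$ for an admissible source $U_i \in S_\ell$ with $i \neq j$. Under Protocol~\ref{covering design version 1}, $U_i$ first selects the proxy uniformly among all $v$ users, contributing a factor $\tfrac1v$, and then, since $i \neq j$, selects uniformly one of the memory spaces containing both $U_i$ and $U_j$. Here the defining PBD property enters decisively: every pair of distinct points is contained in exactly $\lambda$ blocks, so this second choice contributes a factor $\tfrac1\lambda$ that is the same no matter which admissible $U_i$ we started from. Thus $\Pr[\mathbf{M} = S_\ell, \mathbf{P} = U_j \mid \mathbf{S} = U_i] = \tfrac{1}{v\lambda}$, a constant over all admissible sources.

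Finally I would assemble these into Bayes' rule. Combining the constant likelihood $\tfrac{1}{v\lambda}$ with the uniform prior $\Pr[\mathbf{S}=U_i] = \tfrac1v$, the joint probability $\Pr[\mathbf{S}=U_i, \mathbf{M}=S_\ell, \mathbf{P}=U_j]$ equals $\tfrac{1}{v^2\lambda}$ for every source admissible from $U_t$'s viewpoint, i.e. every $U_i \in S_\ell$ with $i \neq j$ and $i \neq t$; there are exactly $|S_\ell|-2$ such users. Normalizing (dividing each equal joint term by their sum) then gives $\Pr[\mathbf{S} = U_i \mid \mathbf{M} = S_\ell, \mathbf{P} = U_j] = \tfrac{1}{|S_\ell|-2}$ for $i \neq j$, matching the claim. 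I expect the main difficulty to be bookkeeping rather than anything deep: one must carefully remove \emph{both} $U_j$ and $U_t$ from the candidate pool — the former because Version~1 never lets the source equal the proxy, the latter because $U_t$ knows the query is not his own — and one must recognize that the uniformity of the posterior hinges precisely on the PBD's constant index $\lambda$. For a general covering design this number could vary with the pair $\{U_i,U_j\}$, the likelihoods would differ, and the posterior would no longer be flat; this is exactly why a (regular) PBD is singled out here.
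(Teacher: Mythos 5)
Your proposal is correct and follows essentially the same route as the paper's proof: the case $i=j$ is dispatched by the protocol definition, the likelihood $\Pr[\mathbf{M}=S_\ell,\mathbf{P}=U_j\mid\mathbf{S}=U_i]=\frac{1}{\lambda v}$ is obtained by factoring the proxy choice and the memory-space choice and invoking the constant index $\lambda$ of the PBD, and Bayes' rule with normalization over the $|S_\ell|-2$ admissible sources yields the flat posterior. The only cosmetic difference is that you keep the prior at $\frac1v$ and restrict the normalizing sum, whereas the paper conditions the prior to $\frac{1}{v-2}$ over the candidates excluding $U_t$ and $U_j$; the constant cancels either way, and your explicit note that $t\neq j$ is needed is an assumption the paper's proof also makes implicitly.
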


\begin{proof}
We first note that the protocol definition ensures that when $i = j$, we have $\Pr[\mathbf{S} = U_i | \mathbf{M} = S_\ell, \mathbf{P}= U_j] = 0$.

We now consider the case $i \neq j$. We set $\lambda_{ij} = |\{S_q | U_i, U_j \in S_q\}| = \lambda$. Thus, we have
\begin{eqnarray*}
\Pr[\mathbf{M} = S_\ell, \mathbf{P}=U_j | \mathbf{S} = U_i]
& = & \Pr[\mathbf{M} = S_\ell| \mathbf{P}=U_j, \mathbf{S} = U_i]\Pr[\mathbf{P}=U_j| \mathbf{S} = U_i]\\
 & = & \frac{1}{\lambda_{ij}v} =  \frac{1}{\lambda v}.
\end{eqnarray*}

Then because $i \neq t, j$, we have $\Pr[\mathbf{S} = U_i] = \frac{1}{v-2}$ and
\begin{eqnarray*}
\Pr[\mathbf{S} = U_i | \mathbf{M} = S_\ell, \mathbf{P}= U_j] & = & \frac{\Pr[\mathbf{S}=U_i] \Pr[\mathbf{M} = S_\ell, \mathbf{P} = U_j | \mathbf{S} = U_i]}{\Pr[\mathbf{M} = S_\ell, \mathbf{P} = U_j]}\\
& = & \frac{\Pr[\mathbf{S}=U_i] \Pr[\mathbf{M} = S_\ell, \mathbf{P} = U_j | \mathbf{S} = U_i]}{\sum_{\substack{U_h \in S_\ell\\ h \neq t,j}} \Pr[\mathbf{S} = U_h ]\Pr[\mathbf{M}=S_\ell, \mathbf{P}=U_j | \mathbf{S} = U_h]}\\
& = & \frac{\frac{1}{v(v-2)\lambda}}{\sum_{\substack{U_h \in S_\ell\\ h \neq t, j}} \frac{1}{v(v-2)\lambda
}}
 =  \frac{1}{|S_\ell|-2},
\end{eqnarray*}

as desired. \end{proof}

\begin{theorem}\label{PBD theorem}
Let $(X, A)$ be a regular PBD of degree $r$. Assume $(X, A)$ is used in the Proxy-designated Covering Design Protocol (Protocol~\ref{covering design version 2}) and assume that $\Pr[\mathbf{S}=U_i] = \frac1v$ for all $U_i \in \mathcal{U}$. Suppose $U_t \in S_\ell$ sees a query $q$ posted to $S_\ell$ that is not his own. Then, from the point of view of $U_t$, for a given query $q$ and $U_i, U_j \in S_\ell$ such that $i \neq t$, it holds that 

\[\Pr[\mathbf{S} = U_i | \mathbf{M} = S_\ell, \mathbf{P}= U_j] = 
		\left\{
			\begin{array}{rl}
 				\frac{\lambda}{\lambda + r(|S_\ell| -2)} & \mbox{ if } i=j\\
 				\frac{r}{\lambda+r(|S_\ell|-2)} & \mbox{ if } i \neq j.
 			\end{array} \right. \]
\end{theorem}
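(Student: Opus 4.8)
The plan is to follow the same Bayesian strategy as in the proof of Theorem~\ref{PBD theorem prot 5}, but to account for the one structural difference between Protocol~\ref{covering design version 2} and Protocol~\ref{covering design version 1}: in Version~2 a source is permitted to act as his own proxy while still writing to a memory space, so the event $\mathbf{S}=U_j$ is now consistent with $U_t$'s observation $(\mathbf{M}=S_\ell,\mathbf{P}=U_j)$. Consequently the case $i=j$ no longer has probability $0$, and the candidate source $U_j$ must be kept in the denominator. As before, I would condition throughout on $U_t$'s knowledge that the query is not his own, i.e.\ that $\mathbf{S}\neq U_t$, and use the hypothesis $\Pr[\mathbf{S}=U_h]=\frac1v$ for every candidate $U_h$; this common prior cancels in the Bayes ratio, so its exact normalization is immaterial.

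First I would compute the likelihood $\Pr[\mathbf{M}=S_\ell,\mathbf{P}=U_j\mid \mathbf{S}=U_i]$ in each of the two cases, factoring it by the chain rule as $\Pr[\mathbf{M}=S_\ell\mid \mathbf{P}=U_j,\mathbf{S}=U_i]\,\Pr[\mathbf{P}=U_j\mid\mathbf{S}=U_i]$ exactly as in Theorem~\ref{PBD theorem prot 5}. Step~\ref{pick proxy2} gives $\Pr[\mathbf{P}=U_j\mid\mathbf{S}=U_i]=\frac1v$ in both cases. For the second factor, when $i=j$ the source chooses uniformly among the memory spaces containing himself, of which there are exactly $r$ by regularity, giving $\frac1r$ and hence $\Pr[\mathbf{M}=S_\ell,\mathbf{P}=U_j\mid\mathbf{S}=U_i]=\frac{1}{vr}$; when $i\neq j$ the source chooses uniformly among the spaces containing both $U_i$ and $U_j$, of which there are exactly $\lambda$ by the PBD property, giving $\frac1\lambda$ and hence $\frac{1}{v\lambda}$. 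This is the step where the regularity hypothesis (for the self-proxy count) and the pairwise-balance hypothesis (for the pair count) both enter, and keeping these two counts distinct is the crucial bookkeeping point.

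With the likelihoods in hand I would apply Bayes' theorem. The candidate sources consistent with the observation $(\mathbf{M}=S_\ell,\mathbf{P}=U_j)$ and with $U_t$'s knowledge $\mathbf{S}\neq U_t$ are exactly the users $U_h\in S_\ell$ with $h\neq t$; since the prior $\frac1v$ is common to all of them, it cancels and the posterior reduces to the ratio of the likelihood for the case at hand to the sum of the likelihoods over all candidates. These candidates split into the single self-proxy candidate $U_h=U_j$, with likelihood $\frac{1}{vr}$, and the $|S_\ell|-2$ remaining candidates $U_h\neq U_j$, each with likelihood $\frac{1}{v\lambda}$, so the summed likelihood is $\frac{1}{vr}+\frac{|S_\ell|-2}{v\lambda}=\frac{\lambda+r(|S_\ell|-2)}{vr\lambda}$. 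Dividing the $i=j$ likelihood $\frac{1}{vr}$ by this sum yields $\frac{\lambda}{\lambda+r(|S_\ell|-2)}$, and dividing the $i\neq j$ likelihood $\frac{1}{v\lambda}$ by it yields $\frac{r}{\lambda+r(|S_\ell|-2)}$, which are precisely the two claimed values.

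I anticipate the only real obstacle is the denominator bookkeeping: unlike the Version~1 argument, where every surviving candidate contributes the same likelihood and the answer collapses to $\frac{1}{|S_\ell|-2}$, here the self-proxy candidate $U_j$ carries a likelihood ($\frac{1}{vr}$) that differs from that of the remaining candidates ($\frac{1}{v\lambda}$), so the denominator is a genuine sum of two unequal term-types and one must count exactly one self-proxy term against $|S_\ell|-2$ ordinary terms. A minor point worth stating explicitly is that this count presumes $U_t\neq U_j$ (the observer is a bystander, not the requested proxy) and $U_j\in S_\ell$, so that $S_\ell$ really does contain $|S_\ell|-2$ users distinct from both $U_t$ and $U_j$; the stated formula is for this generic bystander viewpoint.
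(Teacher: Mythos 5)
Your proposal is correct and follows essentially the same route as the paper's proof: the same chain-rule factorization of the likelihood into $\frac{1}{\lambda_{ij}}\cdot\frac1v$ with $\lambda_{ij}=r$ when $i=j$ and $\lambda$ otherwise, followed by the same Bayes computation in which the common prior cancels and the denominator splits into one self-proxy term and $|S_\ell|-2$ ordinary terms. Your explicit remark that the count presumes $U_t\neq U_j$ is a point the paper leaves implicit, but it does not change the argument.
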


\begin{proof}
We first calculate $\Pr[\mathbf{M} = S_\ell, \mathbf{P}=U_j | \mathbf{S} = U_i]$, where $U_i \in S_\ell$. We again set $\lambda_{ij} = |\{S_q | U_i, U_j \in S_q\}|$. Since $(X, A)$ is a PBD of degree $r$, we have 
\[\lambda_{ij} = \left\{ \begin{array}{rl} r & \mbox{ if } i = j \\ \lambda &\mbox { if } i \neq j. \end{array} \right.\]

We have
\begin{eqnarray*}
\Pr[\mathbf{M} = S_\ell, \mathbf{P}=U_j | \mathbf{S} = U_i]
& = & \Pr[\mathbf{M} = S_\ell| \mathbf{P}=U_j, \mathbf{S} = U_i]\Pr[\mathbf{P}=U_j| \mathbf{S} = U_i]\\
& = & \left(\frac{1}{\lambda_{ij}}\right)\left(\frac 1v \right).
\end{eqnarray*}

Then because $i \neq t$, we have $\Pr[\mathbf{S} = U_i] = \frac{1}{v-1}$ and
\begin{eqnarray*}
\Pr[\mathbf{S} = U_i | \mathbf{M} = S_\ell, \mathbf{P}= U_j] & = & \frac{\Pr[\mathbf{S}=U_i] \Pr[\mathbf{M} = S_\ell, \mathbf{P} = U_j | \mathbf{S} = U_i]}{\Pr[\mathbf{M} = S_\ell, \mathbf{P} = U_j]}\\
& = & \frac{\Pr[\mathbf{S}=U_i] \Pr[\mathbf{M} = S_\ell, \mathbf{P} = U_j | \mathbf{S} = U_i]}{\sum_{\substack{U_h \in S_\ell \\ h \neq t}} \Pr[\mathbf{S} = U_h ]\Pr[\mathbf{M}=S_\ell, \mathbf{P}=U_j | \mathbf{S} = U_h]}\\
& = & \frac{\frac{1}{v(v-1)\lambda_{ij}}}{\sum_{\substack{U_h \in S_\ell\\ h \neq t}} \frac{1}{v(v-1)\lambda_{hj}}}\\
& = & \frac{1}{\lambda_{ij}\left(\frac 1r + \frac{|S_\ell|-2}{\lambda}\right)},
\end{eqnarray*}

which yields the desired result. 
\end{proof}

\begin{remark}
Theorems~\ref{PBD theorem prot 5}~and~\ref{PBD theorem} apply to Protocols~\ref{proxy-designated version 1}~and~\ref{proxy-designated version 2}, respectively, since a BIBD is also a PBD that is regular of degree $r$.
\end{remark}

Theorems~\ref{PBD theorem prot 5}~and~\ref{PBD theorem} demonstrate that the use of a regular PBD increases privacy against other users. This is because, from the point of view of another user, the possible source distribution is closer to uniform.

Theorem~\ref{PBD theorem prot 5} implies that for $U_t \in S_\ell$, if $U_t$ sees a query $q$ with proxy $U_j$ posted to $S_\ell$ that is not his own, any of the remaining $|S_\ell|-2$ users in $S_\ell$ are equally likely to be the source. If Protocol~\ref{covering design version 2} is used instead of Protocol~\ref{covering design version 1}, then $U_t$ can no longer completely eliminate the possibility of the proxy $U_j$ being the source. However, as Theorem~\ref{PBD theorem} shows, the likelihood of the proxy $U_j$ being the source is not the same as the likelihood of $U_i \neq U_j$ being the source. Indeed, it is far less likely that $U_j$ is acting as both proxy and source for $q$ in this situation. Intuitively, if a user $U_i$ is acting as both source and proxy, he has $r$ possible memory spaces to choose from, whereas if $U_i$ chooses another user $U_j$ as proxy, he has only $\lambda$ many memory spaces to choose from.

\subsection{Linked Queries and Coalitions of Users}

Users can also launch an intersection attack against a series of linked queries, similar to the intersection attack launched by DB against the DBWM and DBWMS protocols~(Protocols~\ref{DBWM protocol}~and~\ref{DBWMS Protocol}). The difference here is that users have access to the content of queries via the shared memory spaces; that is, users of a given memory space know which queries have been posted to that memory space, whereas the database only knows the identity of the proxy. 

\begin{example}
Consider the projective plane from Example~\ref{projective plane} and suppose we use Protocol~\ref{proxy-designated version 2}. Suppose that $U_1$ is the source of two linked queries, where the first query uses memory space $\{U_1, U_2, U_3\}$ and the second query uses memory space $\{U_1, U_4, U_5\}$. Now suppose that users $U_2$ and $U_5$ collude. From the first query, user $U_2$ knows that $U_i \in \{U_1,U_3\}$ (regardless of the proxy). From the second query, user $U_5$ knows that $U_i \in \{U_1,U_4\}$ (regardless of the proxy). If users $U_2$ and $U_5$ collude, then they can identify $U_1$ as the source.
\end{example}

In general, we can consider a sequence of $\rho$ linked queries made by the same (unknown) user, and a coalition $C$ of at most $c$ users that is trying to identify the source of the $\rho$ queries. We introduce the following terminology.

\begin{definition}
Consider a set of $\rho$ linked queries and fix a maximum coalition size $c$. If there are always at least $\kappa$ users who could possibly be the source (regardless of the queries and coalition) then we say that the scheme provides \emph{$(\rho,c,\kappa)$-anonymity}.
\end{definition}

\begin{remark}
Of course we want $\kappa \geq 2$ because the source might be identified if $\kappa=1$. 
\end{remark}

First, we consider security against a single user (i.e., the case $c= 1$). Here, it is advantageous to use a design with $\lambda = 1$: 

\begin{lemma}\label{single user} Suppose the BIBD chosen for Protocol~\ref{proxy-designated version 2} satisfies $\lambda = 1$. Then we achieve $(\rho,1,k-1)$-anonymity for any $\rho$.
\end{lemma}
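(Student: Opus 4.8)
The plan is to exploit the one fact that distinguishes a $\lambda = 1$ BIBD from a general covering design: any two distinct users lie in \emph{exactly one} common block. This single property collapses the intersection attack for a lone attacker, because it forces all of the attacker's observations of the linked sequence into a single memory space, leaving nothing to intersect across distinct blocks.

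First I would fix the (unknown) true source $U_s$ and an arbitrary single attacker $U_t \neq U_s$, and let $S^{*}$ denote the unique block with $U_s, U_t \in S^{*}$; existence and uniqueness of $S^{*}$ are exactly the $\lambda = 1$ condition. The key observation is that every linked query $U_t$ can actually see must be posted to a memory space containing $U_t$ (otherwise $U_t$ has no access to it), and, since $U_s$ is the source, in Protocol~\ref{proxy-designated version 2} that query is always written to one of $U_s$'s own memory spaces, hence to a space containing $U_s$. Therefore any $U_t$-observable linked query lies in a block containing both $U_s$ and $U_t$, and by $\lambda = 1$ that block must be $S^{*}$. Thus, no matter how many of the $\rho$ linked queries are posted and how their proxies and spaces are chosen, $U_t$'s entire view of the sequence is confined to the single space $S^{*}$.

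Next I would verify that, within $S^{*}$, no user other than $U_t$ can be excluded. Because we are using the Version~2 protocol, a user seeing a query that is not his own can rule out only himself and not the proxy: by Theorem~\ref{PBD theorem}, for a query seen in $S^{*}$ with any proxy $U_j$, every user of $S^{*} \setminus \{U_t\}$ receives strictly positive posterior probability of being the source (the proxy merely gets the smaller value $\frac{\lambda}{\lambda + r(k-2)}$, which is still nonzero). I would then note directly that any $U_w \in S^{*} \setminus \{U_t\}$ can genuinely produce any observed (query, proxy) pair in $S^{*}$ --- as its own proxy through the self-proxy branch when the observed proxy is $U_w$, or as an ordinary source with proxy in $S^{*} \setminus \{U_w\}$ otherwise --- so no individual observation eliminates any member of $S^{*} \setminus \{U_t\}$.

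Combining these, the set of users consistent with the whole observed linked sequence is exactly $S^{*} \setminus \{U_t\}$, of cardinality $k-1$ since $|S^{*}| = k$ in a BIBD; and if $U_t$ happens to observe none of the queries, the possible-source set is all of $\mathcal{U} \setminus \{U_t\}$, which is even larger. As this holds for every source, every attacker $U_t$, and every $\rho$, the scheme provides $(\rho, 1, k-1)$-anonymity. I expect the main obstacle to be the second step: recognizing that $\lambda = 1$ funnels all of a single attacker's observations into the unique block $S^{*}$ is precisely what neutralizes the intersection attack, and everything after that is routine bookkeeping inside one block.
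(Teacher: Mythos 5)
Your proposal is correct and follows essentially the same route as the paper's (much terser) proof: the $\lambda=1$ condition confines all of a single attacker's observations of the linked sequence to the one block shared with the source, and then Theorem~\ref{PBD theorem} (the Version~2 property that an observer can rule out only himself, not the proxy) gives the $k-1$ remaining candidates. Your write-up just makes explicit the bookkeeping that the paper leaves implicit.
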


\begin{proof}
If $U_i$ sees a sequence of $\rho$ linked queries from the same source,
the queries must all involve the same memory space, because $\lambda = 1$. The result then follows from Theorem~\ref{PBD theorem}.
\end{proof}

\begin{remark}The result of Lemma~\ref{single user} does not apply to Protocol~\ref{proxy-designated version 1}. This is because in Protocol~\ref{proxy-designated version 1}, given a series of linked queries posted to a given memory space, the only user who will never act as proxy for one of these queries is the query issuer. This is similar to the projective plane attack in~\cite{SB11} that we mentioned in Section~\ref{subsec: Attacks}.
\end{remark}

On the other hand, the security of Protocol~\ref{proxy-designated version 2} against a single user might be completely eliminated 
if we use a design with $\lambda > 1$. For example, suppose we use
a BIBD with $\lambda = 2$ in which every pair of blocks intersects in
at most two points (such a BIBD is termed \emph{supersimple}).
Consider two users $U_i$ and $U_j$. There exist two memory spaces, 
say $S_1$ and $S_2$, where $S_1 \cap S_2 = \{U_i,U_j\}$. 
Suppose $U_i$ observes two linked queries, say $q_1$ and $q_2$, that involve $S_1$ and $S_2$, respectively. Then $U_i$ can deduce that $U_j$ is the source.

We now consider some more special cases of this problem, for small values of $\rho$ and for certain special types of designs. This is because, in order to analyze the problem of linked queries, it becomes necessary to understand the block intersection properties of the scheme's chosen design. In Section~\ref{subsec: Methods to Increase Privacy}, we consider a more general approach to mitigate this type of attack in the Proxy-designated Covering Design Protocols~(Protocols~\ref{covering design version 1}~and~\ref{covering design version 2}).

The case $\rho = 1$ (i.e., security against a single query) is easy to analyze:

\begin{lemma} We achieve $(1,c,k-c-1)$-anonymity in Protocol~\ref{proxy-designated version 1}, where $c \leq k-3$. In Protocol~\ref{proxy-designated version 2}, we achieve $(1, c, k-c)$-anonymity, with the requirement that $c \leq k-2$.
\end{lemma}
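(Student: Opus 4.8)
The plan is to treat this as a worst-case counting argument: I determine which users a coalition $C$ (with $|C| \leq c$, and $C$ not containing the source) can eliminate as possible sources of the single query $q$, and then invoke the posterior distributions from Theorems~\ref{PBD theorem prot 5}~and~\ref{PBD theorem} to certify that every user $C$ cannot eliminate is a genuine possible source. The common starting observation is that, since a BIBD is uniform of rank $k$, the query $q$ is posted to some memory space $S_\ell$ with $|S_\ell| = k$, and in both protocols the source necessarily lies in $S_\ell$. Coalition members outside $S_\ell$ hold no key to $S_\ell$ and contribute nothing, so only the members in $C \cap S_\ell$ are relevant.

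For Protocol~\ref{proxy-designated version 1}, I would argue that $C$ can rule out precisely $(C \cap S_\ell) \cup \{U_j\}$, where $U_j$ is the designated proxy: each coalition member knows he is not the source, and everyone can exclude $U_j$ because a query visible in a memory space can only arise from Step~2, in which source and proxy differ (a user acting as his own proxy transmits directly to DB without writing to a memory space). Theorem~\ref{PBD theorem prot 5} confirms that every remaining user of $S_\ell$ has probability $\frac{1}{k-2}$ of being the source, hence none can be eliminated. The worst case minimizes the surviving set, which occurs when all $c$ coalition members lie in $S_\ell$ and $U_j$ is not among them, so $|(C \cap S_\ell) \cup \{U_j\}| = c + 1$ and exactly $k - c - 1$ candidates remain; this is $(1, c, k - c - 1)$-anonymity, and $\kappa = k - c - 1 \geq 2$ forces $c \leq k - 3$.

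For Protocol~\ref{proxy-designated version 2}, the single difference is that a source may act as his own proxy while still writing $q$ to a memory space, so $U_j$ can no longer be excluded; Theorem~\ref{PBD theorem} makes this precise, since the proxy retains positive probability $\frac{\lambda}{\lambda + r(k-2)}$ of being the source. Thus $C$ can rule out only $C \cap S_\ell$, of size at most $c$, and the worst case (all $c$ members in $S_\ell$) leaves $k - c$ candidates, giving $(1, c, k - c)$-anonymity with $\kappa = k - c \geq 2$ requiring $c \leq k - 2$.

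The main obstacle is pinning down exactly which users $C$ can eliminate, in particular the asymmetry between the two protocols over whether the proxy $U_j$ is excludable, and confirming that the surviving users are genuine possible sources rather than merely not-yet-eliminated. This is exactly where Theorems~\ref{PBD theorem prot 5}~and~\ref{PBD theorem} are indispensable: conditioned on the observed memory space and proxy, each non-excluded user of $S_\ell$ carries strictly positive probability, so the surviving count is tight and equals $\kappa$.
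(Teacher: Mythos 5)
Your proposal is correct and follows essentially the same argument as the paper: count the users of the relevant memory space $S_\ell$ that the coalition can rule out (namely $C \cap S_\ell$ plus the proxy in Protocol~\ref{proxy-designated version 1}, and only $C \cap S_\ell$ in Protocol~\ref{proxy-designated version 2}), yielding $|S_\ell \setminus (C \cup \{U_j\})| \geq k-c-1$ and $|S_\ell \setminus C| \geq k-c$ respectively. Your additional appeal to Theorems~\ref{PBD theorem prot 5}~and~\ref{PBD theorem} to certify that the surviving users carry strictly positive posterior probability is a reasonable tightening that the paper leaves implicit, but it does not change the approach.
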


\begin{proof}
We first consider Protocol~\ref{proxy-designated version 1}. Let $C$ be a coalition of size at most $c$ and let $S_h$ be the memory space used for the query $q_1$. Then $|C \cap S_h| \leq c$. $C$ can rule out as possible sources the users in $C \cap S_h$ as well as the proxy $U_j$ (provided that $U_j \not\in C \cap S_h$). Since $|S_h \backslash (C \cup \{U_j\}) | \geq k-c-1$, the result follows. An obvious requirement here is $c \leq k-3$.

For Protocol~\ref{proxy-designated version 2}, all other users with access to the given memory space can only eliminate themselves as the possible source of the query. This improves the information theoretic security for user-privacy against other users, as we now have $|S_h \backslash C | \geq k-c$. An obvious requirement here is $c \leq k-2$.
\end{proof}

For the case $\rho=2$, we consider BIBDs with a special intersection property.

\begin{lemma}\label{mu} Suppose the BIBD of Protocols~\ref{proxy-designated version 1}~and~\ref{proxy-designated version 2} satisfies the additional property that any two blocks intersect in at least $\mu$ points. Consider two linked queries, $q_1$ and $q_2$. Then we achieve $(2,c,\mu-c-2)$-anonymity, where $c \leq \mu - 4$, in Protocol~\ref{proxy-designated version 1}. In Protocol~\ref{proxy-designated version 2}, we achieve $(2, c, \mu-c)$-anonymity, with the requirement $c \leq \mu -2$.
\end{lemma}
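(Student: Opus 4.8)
The plan is to track, from the coalition's viewpoint, the exact set of users that remain viable as the common source of the two linked queries, and to lower-bound its size in each protocol. First I would fix the two memory spaces $S_1$ and $S_2$ used for $q_1$ and $q_2$ (these are blocks of the BIBD), together with the announced proxies $P_1$ and $P_2$. Because the source must have access to whichever memory space he posts to, the source lies in $S_1$ and in $S_2$, hence in $S_1 \cap S_2$. By the hypothesis that any two blocks meet in at least $\mu$ points we have $|S_1 \cap S_2| \geq \mu$; the degenerate case $S_1 = S_2$ causes no trouble, since then $|S_1 \cap S_2| = k \geq \mu$ (two blocks can intersect in at most $k$ points, so $\mu \leq k$). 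Thus, before any elimination, the coalition already knows the source lies in a set of size at least $\mu$.

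Next I would determine precisely which members of $S_1 \cap S_2$ the coalition can discard. In Protocol~\ref{proxy-designated version 1}, a user who is simultaneously source and proxy submits directly to DB and writes nothing to a memory space; hence neither $P_1$ nor $P_2$ can be the source, so these two proxies are eliminated. In addition, every member of $C$ knows whether or not it issued each query, so the (at most $c$) users of $C$ are eliminated. Therefore the coalition removes at most $c+2$ points from $S_1 \cap S_2$, leaving at least $\mu - c - 2$ candidate sources; this gives $(2,c,\mu-c-2)$-anonymity, and requiring $\mu - c - 2 \geq 2$ forces $c \leq \mu - 4$.

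For Protocol~\ref{proxy-designated version 2} the only change is that a source may now act as his own proxy while still using a memory space, so a proxy can coincide with the source and cannot be ruled out. Consequently the coalition can eliminate only its own (at most $c$) members from $S_1 \cap S_2$, leaving at least $\mu - c$ candidates and yielding $(2,c,\mu-c)$-anonymity with the requirement $c \leq \mu - 2$. In both cases each surviving candidate indeed has positive posterior probability of being the source (nothing the coalition observes excludes it), so these counts are exactly the values of $\kappa$ required by the definition of $(\rho,c,\kappa)$-anonymity.

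The computations here are elementary, so the main obstacle is conceptual bookkeeping rather than algebra: one must argue that the listed eliminations (the two proxies in Protocol~\ref{proxy-designated version 1}, and the coalition's own members in both protocols) are the \emph{only} deductions available to $C$, and that every potential overlap or edge case---a proxy lying inside $C$, a proxy lying outside $S_1 \cap S_2$, or $S_1 = S_2$---can only \emph{decrease} the number of eliminated points, so the stated bounds hold in the worst case. I would record this as a brief worst-case case check rather than enumerate every configuration.
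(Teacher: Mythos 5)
Your argument is correct and follows essentially the same route as the paper: the paper also identifies the candidate set as $(S_{h_1}\cap S_{h_2})\setminus(C\cup\{U_i,U_j\})$ for Protocol~\ref{proxy-designated version 1} and $(S_{h_1}\cap S_{h_2})\setminus C$ for Protocol~\ref{proxy-designated version 2}, and lower-bounds its size by $\mu-c-2$ and $\mu-c$ respectively. Your additional remarks on edge cases (proxies inside $C$, $S_1=S_2$) and on the origin of the constraints $c\leq\mu-4$ and $c\leq\mu-2$ are consistent with, and slightly more explicit than, the paper's treatment.
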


\begin{proof}
Let $C$ be a coalition of size at most $c$ and let $S_{h_1}$ be the memory space used for the query $q_1$ and $S_{h_2}$ be the memory space used for $q_2$. Let $U_i$ be the proxy for $q_1$ and let $U_j$ be the proxy for $q_2$. 

In Protocol~\ref{proxy-designated version 1}, we have
\[|(S_{h_1} \backslash (C \cup \{U_i\})) \cap (S_{h_2} \backslash (C \cup \{U_j\}))| = |(S_{h_1}\cap S_{h_2}) \backslash (C \cup \{U_i,U_j\})| \geq \mu-c-2,\] so we achieve $(2,c,\mu-c-2)$-anonymity. An obvious requirement here is $c \leq \mu - 4$.

In Protocol~\ref{proxy-designated version 2}, we have  \[ |(S_{h_1} \backslash C)  \cap (S_{h_2} \backslash C)| = |(S_{h_1}\cap S_{h_2}) \backslash C | \geq \mu-c,\]
so we achieve $(2,c,\mu-c)$-anonymity. Here, an obvious requirement  is $c \leq \mu - 2$. 
\end{proof}

We can apply Lemma~\ref{mu} to the case of a symmetric BIBD, in which any two blocks intersect in exactly $\lambda$ points, as noted in Theorem~\ref{symmetric intersection}. This achieves the following result:

\begin{corollary}
Suppose the BIBD chosen for Protocols~\ref{proxy-designated version 1}~and~\ref{proxy-designated version 2} is a symmetric $(v, v, k, k, \lambda)$-BIBD. Then Protocol~\ref{proxy-designated version 1} provides $(2, c, \lambda - c - 2)$-anonymity for any $c \leq \lambda -4$ and Protocol~\ref{proxy-designated version 2} provides $(2, c, \lambda-c)$-anonymity for any $c \leq \lambda-2$.
\end{corollary}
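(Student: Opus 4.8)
The plan is to recognize this corollary as a direct specialization of Lemma~\ref{mu}, where the abstract block-intersection parameter $\mu$ is pinned down by the structure theorem for symmetric designs. The only real content is identifying the correct value of $\mu$ and confirming that the hypotheses of Lemma~\ref{mu} are met; after that the conclusion is pure substitution.

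First I would invoke Theorem~\ref{symmetric intersection}, which asserts that in a symmetric BIBD any two distinct blocks intersect in \emph{exactly} $\lambda$ points. In particular, any two blocks intersect in \emph{at least} $\lambda$ points, so the symmetric $(v,v,k,k,\lambda)$-BIBD satisfies the ``any two blocks meet in at least $\mu$ points'' hypothesis of Lemma~\ref{mu} with $\mu = \lambda$. I would also remark that $\mu = \lambda$ is the best (largest) admissible value, since the intersection is \emph{exactly} $\lambda$; taking the maximal $\mu$ yields the strongest anonymity guarantee (largest $\kappa$), which is what we want.

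Next I would simply feed $\mu = \lambda$ into the two conclusions of Lemma~\ref{mu}. For Protocol~\ref{proxy-designated version 1}, the lemma gives $(2, c, \mu - c - 2)$-anonymity subject to $c \leq \mu - 4$; substituting $\mu = \lambda$ produces $(2, c, \lambda - c - 2)$-anonymity for $c \leq \lambda - 4$. For Protocol~\ref{proxy-designated version 2}, the lemma gives $(2, c, \mu - c)$-anonymity subject to $c \leq \mu - 2$; substituting $\mu = \lambda$ produces $(2, c, \lambda - c)$-anonymity for $c \leq \lambda - 2$. These are precisely the two statements claimed in the corollary, so the proof is complete.

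There is essentially no technical obstacle here: the corollary is a one-line consequence once the right substitution is spotted, and all the combinatorial work was already carried out in the proof of Lemma~\ref{mu} (the bound on $|(S_{h_1} \cap S_{h_2}) \setminus (C \cup \{U_i, U_j\})|$ and its Protocol~\ref{proxy-designated version 2} analogue). If I had to flag any point requiring care, it would be to state explicitly that the symmetric-design parameters $(v,v,k,k,\lambda)$ indeed describe a BIBD of the form assumed in Lemma~\ref{mu} (with $b = v$ and $r = k$), so that the lemma applies verbatim, and to note that the ``exactly $\lambda$'' intersection property legitimately instantiates the ``at least $\mu$'' hypothesis.
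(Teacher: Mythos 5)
Your proposal is correct and matches the paper exactly: the authors also obtain the corollary by noting (via Theorem~\ref{symmetric intersection}) that any two blocks of a symmetric BIBD intersect in exactly $\lambda$ points and then applying Lemma~\ref{mu} with $\mu = \lambda$. Your additional remarks on why the ``exactly $\lambda$'' property instantiates the ``at least $\mu$'' hypothesis are sound but not needed beyond what the paper states.
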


An interesting extension to the concept of $(\rho,c,\kappa)$-anonymity is to consider an average-case analysis of privacy against other users. With $(\rho,c,\kappa)$-anonymity, we are analyzing the worst-case scenario---the minimum level of privacy the scheme achieves against \emph{any} possible coalition. While this is useful in some respects, schemes exhibiting powerful worst-case scenario attacks might actually perform quite well against a typical coalition. In particular, if a scheme needs to be concerned about random coalitions of users, such an average-case analysis might prove informative, as the following example shows.

\begin{example} Suppose we use a symmetric $(v, v, k, k, 3)$-BIBD in any of our P2P UPIR protocols. Consider linked queries $q_1$ and $q_2$ submitted by $U_i$, with corresponding memory spaces $S_{h_1}$ and $S_{h_2}$. By Theorem~\ref{symmetric intersection}, since the BIBD is symmetric, we have $|S_{h_1}\cap S_{h_2}| = 3$. That is, there are exactly two other users, say $U_j$ and $U_t$, in both $S_{h_1}$ and $S_{h_2}$. This implies that there is only \emph{one} coalition of users of size $2$ that can identify $U_i$ as the source. If we consider random coalitions, the probability that a random coalition of size 2 consists of $\{U_j,U_t\}$ is $\dfrac{1}{ {{v - 1} \choose 2 }}$. 

Let us consider other coalitions of size 2. Suppose $C = \{U_j, U_{\ell}\}$, for some user $U_\ell \neq U_t, U_i$. Then $C$ knows the source is either $U_t$ or $U_i$. There are $v-3$ such coalitions. The analysis for $C$ containing $U_t$ but not $U_j$ is similar. If we consider $C = \{U_\ell, U_{\ell'}\}$ such that $U_t, U_j \notin C$, the most advantageous coalition satisfies (without loss of generality) $U_\ell \in S_{h_1}$, $U_{\ell'} \in S_{h_2}$. In this case, $C$ sees both $q_1$ and $q_2$ and can conclude that the source is one of $\{U_i, U_j, U_t\}$. There are $(k-3)^2$ such coalitions. Other coalitions of size 2 either see only one of $\{q_1, q_2\}$, in which case the analysis reduces to that of Theorem~\ref{PBD theorem prot 5}~or~\ref{PBD theorem}, or neither of the linked queries, in which case $C$ can do nothing. 
\end{example}

\subsection{Some Methods to Increase Privacy}
\label{subsec: Methods to Increase Privacy}
\subsubsection{$t$-anonymity sets}
Beyond the limited cases described above, it is difficult to analyze the privacy guarantees of the proxy-designated BIBD and covering design protocols in the presence of linked queries. In particular, it becomes difficult to analyze the case of intersections of three or more memory spaces, and the size of these intersections probably decreases quickly. We might, however, wish to provide privacy for $\rho > 2$. One possible solution is to introduce the notion of built-in \emph{permanent anonymity sets} for each user. That is, suppose the set of users $\mathcal{U}$ is partitioned into \emph{anonymity sets} $\mathcal{T}_1, \ldots \mathcal{T}_g$, where each $\mathcal{T}_{\ell}$ consists of at least $t$ users. We further assume that the set system satisfies the property
 $\mathcal{T}_{\ell} \cap S_j \in \{\emptyset, \mathcal{T}_{\ell}\}$ for all $\ell, j$. We call such a construction a \emph{covering design with $t$-anonymity sets}.
 
\begin{theorem}Fix a partition $\mathcal{T} = \{\mathcal{T}_1, \ldots \mathcal{T}_g\}$ of the set of users $\mathcal{U}$, such that each $\mathcal{T}_{\ell}$ consists of at least $t$ users. Then we can construct a covering design with $t$-anonymity sets.
\end{theorem}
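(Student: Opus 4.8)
The plan is to exploit the structural meaning of the defining condition $\mathcal{T}_\ell \cap S_j \in \{\emptyset, \mathcal{T}_\ell\}$: it says precisely that every block $S_j$ is a union of whole anonymity sets. I would therefore pass to the \emph{quotient} in which each anonymity set is collapsed to a single ``super-point.'' Concretely, set $Y = \{\mathcal{T}_1, \ldots, \mathcal{T}_g\}$ and regard $Y$ as a point set of size $g$. Any set system on $Y$ lifts, by replacing each super-point $\mathcal{T}_\ell$ with the entire set $\mathcal{T}_\ell$, to a set system on $\mathcal{U}$ all of whose blocks automatically satisfy the $t$-anonymity-set condition. The construction of the desired design on $\mathcal{U}$ thus reduces to choosing a suitable set system on the $g$ super-points.

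First I would take as the base system on $Y$ the collection $B$ of all $\binom{g}{2}$ pairs of super-points, and then \emph{blow up}: for each pair $\{\mathcal{T}_i, \mathcal{T}_j\} \in B$ form the block $S_{ij} = \mathcal{T}_i \cup \mathcal{T}_j \subseteq \mathcal{U}$. Taking $A = \{S_{ij} : i < j\}$, I would propose $(\mathcal{U}, A)$ as the candidate design. Verifying the two required properties is then routine. For the covering property I split pairs of users into two cases: if $U, U'$ lie in distinct anonymity sets $\mathcal{T}_i, \mathcal{T}_j$, then $S_{ij}$ contains both; if they lie in the same set $\mathcal{T}_i$, then (using $g \geq 2$) any $S_{ij}$ with $j \neq i$ contains all of $\mathcal{T}_i$, hence both users. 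For the $t$-anonymity-set property, each $S_{ij}$ is by construction a union of whole anonymity sets, so $\mathcal{T}_\ell \cap S_{ij}$ equals $\mathcal{T}_\ell$ when $\ell \in \{i,j\}$ and $\emptyset$ otherwise, and each $\mathcal{T}_\ell$ has at least $t$ users by hypothesis.

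The one point needing care, and the main obstacle, is the requirement that blocks be \emph{proper} subsets of $\mathcal{U}$, since in a set system blocks are nonempty proper subsets. Here I would observe that each $S_{ij} = \mathcal{T}_i \cup \mathcal{T}_j$ omits every $\mathcal{T}_k$ with $k \notin \{i,j\}$, so $S_{ij} \neq \mathcal{U}$ exactly when $g \geq 3$. Thus the blow-up of the ``all pairs'' design already does the job whenever $g \geq 3$. The degenerate cases $g \leq 2$ admit no valid construction at all: a cross-pair could only be covered by a block forced (via the anonymity-set condition) to contain all of $\mathcal{U}$, which is not a proper subset. So the statement is naturally understood in the regime $g \geq 3$.

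More generally, the identical blow-up argument goes through starting from \emph{any} covering design on the $g$ super-points whose blocks are proper subsets of $Y$ (the pair covering being only the simplest choice). This extra freedom lets one control the resulting block sizes, which is the feature one would actually want when tuning the anonymity parameters of the protocol; I would flag this generalization as a remark rather than fold it into the existence proof.
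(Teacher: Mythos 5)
Your proof is correct and takes essentially the same route as the paper: collapse each anonymity set to a super-point, take a covering design on the $g$ super-points, and blow each super-point back up to its anonymity set (the paper leaves the base covering design generic, exactly as in your closing remark, while you instantiate it with the all-pairs design). Your additional observation that blocks must be proper subsets of $\mathcal{U}$, forcing $g \geq 3$, is a legitimate point of care that the paper passes over silently.
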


\begin{proof}
We can construct a covering design with $t$-anonymity sets by the following method. First, we construct a covering design on a set of $g$ points, say $\mathcal{X} = \{x_1, \ldots, x_g\}$. We then define a bijection $\sigma$ between the set of $g$ points and the $g$ anonymity sets, so $\sigma(\mathcal{X}) = \mathcal{T}$. Finally, for each $x_{\ell} \in \mathcal{X}$, we replace the point $x_{\ell}$ by the anonymity set $\sigma(x_{\ell}) = \mathcal{T}_{\ell'}$, where $1 \leq \ell' \leq g$. This yields a covering design satisfying the desired property.
\end{proof}

\begin{theorem} Fix a covering design with permanent anonymity sets of minimum size $t$. Then we achieve $(\rho, c, t-c-\rho)$-anonymity in Protocol~\ref{covering design version 1} and $(\rho, c, t-c)$-anonymity in Protocol~\ref{covering design version 2}:
\end{theorem}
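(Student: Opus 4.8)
The plan is to exploit the defining property of a covering design with $t$-anonymity sets, namely $\mathcal{T}_\ell \cap S_j \in \{\emptyset, \mathcal{T}_\ell\}$, to show that the source is information-theoretically indistinguishable from the other members of its own anonymity set. First I would fix the (unknown) source $U_i$ and let $\mathcal{T}_{\ell^*}$ be the unique anonymity set containing $U_i$, noting $|\mathcal{T}_{\ell^*}| \geq t$. The crucial observation is that all members of $\mathcal{T}_{\ell^*}$ lie in exactly the same collection of memory spaces: if a memory space $S_j$ contains one element of $\mathcal{T}_{\ell^*}$, then by the defining property it contains all of $\mathcal{T}_{\ell^*}$. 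Consequently, whenever a query of the linked sequence is posted to a memory space $S_{h_m}$ with $U_i \in S_{h_m}$, we automatically have $\mathcal{T}_{\ell^*} \subseteq S_{h_m}$, and hence $\mathcal{T}_{\ell^*} \subseteq \bigcap_m S_{h_m}$.

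Next I would formalize what a coalition $C$ of size at most $c$ can deduce from the $\rho$ linked queries. For each query $q_m$ the coalition observes the memory space $S_{h_m}$ and the designated proxy $U_{j_m}$, and it knows the source is common to all queries. I would argue that the only ways $C$ can eliminate a member $U_{i'} \in \mathcal{T}_{\ell^*}$ as a possible source are the two mechanisms already identified earlier in this section: (i) a member of $C$ knows it is not the source, ruling out at most $|C \cap \mathcal{T}_{\ell^*}| \leq c$ users; and (ii) in Protocol~\ref{covering design version 1}, the proxy $U_{j_m}$ can be ruled out, since a source acting as its own proxy submits directly to DB and never posts to a memory space, so the source of a \emph{posted} query is never its own proxy. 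The heart of the argument is that no further distinction is available: since each $U_{i'} \in \mathcal{T}_{\ell^*}$ has access to every $S_{h_m}$ and could have designated $U_{j_m}$ as proxy there, the alternative hypothesis ``the source is $U_{i'}$'' is consistent with every observation, provided $U_{i'}$ survives (i) and (ii).

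I would finish by counting the survivors. For Protocol~\ref{covering design version 1}, the possible sources include $\mathcal{T}_{\ell^*} \setminus (C \cup \{U_{j_1}, \ldots, U_{j_\rho}\})$; removing at most $c$ coalition members and at most $\rho$ proxies leaves a set of size at least $t - c - \rho$, yielding $(\rho, c, t-c-\rho)$-anonymity. For Protocol~\ref{covering design version 2}, elimination (ii) is unavailable, because a source may act as its own proxy through a memory space and so the proxy $U_{j_m}$ cannot be excluded; the surviving set is then $\mathcal{T}_{\ell^*} \setminus C$, of size at least $t - c$, yielding $(\rho, c, t-c)$-anonymity. In both cases $U_i$ itself survives (it is neither in $C$ nor, in Protocol~\ref{covering design version 1}, a posted proxy), confirming the source is among the possible sources.

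The step I expect to be the main obstacle is rigorously justifying the indistinguishability claim of the second paragraph—that intersecting the constraints from all $\rho$ linked queries reveals nothing separating members of $\mathcal{T}_{\ell^*}$ beyond eliminations (i) and (ii). This amounts to exhibiting, for each surviving $U_{i'} \in \mathcal{T}_{\ell^*}$, a legal protocol transcript with $U_{i'}$ as source that reproduces exactly the observed sequence of (memory space, proxy) pairs. The anonymity-set property furnishes the needed memberships $U_{i'} \in S_{h_m}$, and one must check that the proxy choices remain admissible for $U_{i'}$: always so in Protocol~\ref{covering design version 2}, and so in Protocol~\ref{covering design version 1} precisely when $U_{i'} \neq U_{j_m}$ for every $m$, which is exactly the condition enforced by elimination (ii).
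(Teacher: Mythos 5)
Your proposal is correct and follows essentially the same route as the paper: the source's anonymity set $\mathcal{T}_{\ell^*}$ is contained in every memory space used, so $|S_{h_1}\cap\cdots\cap S_{h_\rho}|\geq t$, and removing the at most $c$ coalition members (plus, in Protocol~\ref{covering design version 1}, the at most $\rho$ proxies) gives the stated bounds. The paper's proof is terser---it just writes down the set-intersection computation---whereas you additionally make explicit the containment $\mathcal{T}_{\ell^*}\subseteq\bigcap_m S_{h_m}$ and the indistinguishability justification, which is a worthwhile elaboration but not a different argument.
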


\begin{proof}

Let $C$ be a coalition of size at most $c$ and consider a set of linked queries $q_1, \ldots, q_{\rho}$. Let $S_{h_{\ell}}$ be the memory space used for the query $q_{\ell}$ and let $U_{h_{\ell}}$ denote the proxy for $q_{\ell}$, for $1 \leq \ell \leq \rho$. 

In Protocol~\ref{covering design version 1}, we have
\begin{eqnarray*}
|(S_{h_1} \backslash (C \cup \{U_{h_1}\})) \cap (S_{h_2} \backslash (C \cup \{U_{h_2}\})) \cap \cdots \cap (S_{h_{\rho}} \backslash (C \cup \{U_{q_{\rho}}\}))|\\
 =
|(S_{h_1}\cap S_{h_2} \cap \cdots \cap S_{h_{\rho}}) \backslash (C \cup \{U_{h_1},U_{h_2}, \ldots, U_{h_{\rho}}\})| \geq t-c-\rho.
\end{eqnarray*}

In Protocol~\ref{covering design version 2}, we have
\begin{eqnarray*}
|(S_{h_1} \backslash C) \cap (S_{h_2} \backslash C) \cap \cdots \cap (S_{h_{\rho}} \backslash C )|
 =
|(S_{h_1}\cap S_{h_2} \cap \cdots \cap S_{h_{\rho}}) \backslash C| \geq t-c.
\end{eqnarray*} 
This completes the proof. \end{proof}

The idea of using permanent anonymity sets changes the trust requirements of the scheme. In particular, $U_i$ must trust the users contained in $\mathcal{T}_i$ to a greater extent than users in $\mathcal{U} \backslash \mathcal{T}_i$, since members of $\mathcal{T}_i$ necessarily have access to $U_i$'s query sphere. That is, there is no confidentiality among members of an anonymity set.

\subsubsection{Query hops}
\label{subsubsec: Query hops}
Another possible method to increase privacy against other users, which we briefly introduce here, involves the introduction of \emph{query hops} into the protocols. That is, we can consider allowing a designated proxy to rewrite a given query to another memory space, rather than simply forwarding the query to DB. We can establish a probabilistic approach, such that a designated proxy $U_j$ will, with some fixed probability $p$, forward the query to DB; otherwise $U_j$ rewrites the query uniformly at random to one of his associated memory spaces. When a response is received, a user simply posts the response back to the memory space where it was read from. This can continue until the query response reaches the source. In this case, it is easy to see that on average a query is posted $1/p$ times. This method removes the certainty a curious user has that the source of a given query is associated with the memory space in which that query is written. It is an interesting problem to analyze the privacy guarantees such a scheme provides against other users.

\section{Conclusion}
\label{sec: Conclusion}

In this paper, we have given an overview and analysis of current research in UPIR, including introducing an attack by the database on user privacy. We have established a new model for P2P UPIR and considered the problem of user privacy against other users in detail, going well beyond previous work. We have given two new P2P UPIR protocols and provided an analysis of the privacy properties provided by these protocols. Our P2P UPIR schemes, by taking advantage of the wide variety of available combinatorial designs, provide flexibility in the set-up phase, allowing for a choice between having a dynamic scheme (in which users are permitted to enter and leave the system), or providing increased privacy against other users. Finally, we have pointed out several directions for future research in this area. In particular, there is much work to be done regarding user privacy against other users, such as moving beyond the worst-case analysis we provide here and considering an average-case analysis, as well as the construction of P2P UPIR schemes that utilize query hops to mitigate loss of privacy against other users.

\section*{Acknowledgements} We would like to thank the
referees for their helpful remarks and suggestions.

\end{document}